\spnewtheorem*{sketch}{Sketch of proof}{\itshape}{\rmfamily}
\renewcommand{\deg}{\mathsf{deg}}
\renewcommand{\qed}{\hfill $\square$}
\newcommand{\remove}[1]{}
\begin{document}

\newtheorem{observation}{Observation}
\newtheorem{myclaim}{Claim}

\title{ $(k,p)$-Planarity: A Relaxation of \\Hybrid Planarity}

\author{
Emilio Di Giacomo\inst{1}
\and
William J. Lenhart\inst{2}
\and
Giuseppe Liotta\inst{1}
\and \\
Timothy W. Randolph\inst{3}
\and
Alessandra Tappini\inst{1}
}

\date{}

\institute{
Universit\`a degli Studi di Perugia, Italy\\
\email{\{emilio.digiacomo,giuseppe.liotta\}@unipg.it}
\email{alessandra.tappini@studenti.unipg.it}
\and
Williams College, USA\\
\email{wlenhart@williams.edu}
\and
Columbia University, USA\\
\email{t.randolph@columbia.edu}
}

\maketitle

% \linenumbers

\begin{abstract}
We present a new  model for hybrid planarity that relaxes existing hybrid representations. A graph $G = (V,E)$ is $(k,p)$-planar if $V$ can be partitioned into clusters of size at most $k$ such that $G$ admits a drawing where: (i) each cluster is associated with a closed, bounded planar region, called a \emph{cluster region}; (ii) cluster regions are pairwise disjoint, (iii) each vertex $v \in V$ is identified with at most $p$ distinct points, called \emph{ports}, on the boundary of its cluster region; (iv) each inter-cluster edge $(u,v) \in E$ is identified with a Jordan arc connecting a port of $u$ to a port of $v$; (v) inter-cluster edges do not cross or intersect cluster regions except at their endpoints. We first tightly bound the number of edges in a $(k,p)$-planar graph with $p<k$. We then prove that $(4,1)$-planarity testing and $(2,2)$-planarity testing are NP-complete problems. Finally, we prove that neither the class of $(2,2)$-planar graphs nor the class of $1$-planar graphs contains the other, indicating that the $(k,p)$-planar graphs are a large and novel class.

\keywords{$(k,p)$-planarity \and hybrid representations \and cluster graphs}

\end{abstract}

%%%%%%%

\section{Introduction}
Visualization of non-planar graphs is one of the most studied graph-drawing problems in recent years. In this context, an emerging topic is hybrid representations (see, e.g.,~\cite{DBLP:conf/gd/AngeliniLBFPR15,Batagelj_Visual_2011,DBLP:conf/gd/LozzoBFP16,dlpt-ntptsc-17,hfm-dhvsn-07}).
A hybrid representation simplifies the visual analysis of a non-planar graph by adopting different visualization paradigms for different portions of the graph. The graph is divided into (typically dense) subgraphs called \emph{clusters} which are restricted to limited regions of the plane. Edges between vertices in the same cluster are called \emph{intra-cluster} edges, and edges between vertices in different clusters are called \emph{inter-cluster} edges. Inter-cluster edges are represented according to the classical node-link graph drawing paradigm, while the clusters and their intra-cluster edges are represented by adopting alternative paradigms. A hybrid representation thus reduces the number of inter-cluster edges and the visual complexity of much of the drawing at the cost of creating cluster regions of high visual complexity. As a result, a hybrid representation provides an easy to read overview of the graph structure and it admits a ``drill-down'' approach when a more detailed analysis of some of its clusters is needed.

%\setlength\intextsep{0pt}
%\setlength{\columnsep}{10pt}%
%\begin{wrapfigure}{r}{0.32\textwidth}
%	\centering
%	\subfigure[] {\includegraphics[width=0.29\textwidth]{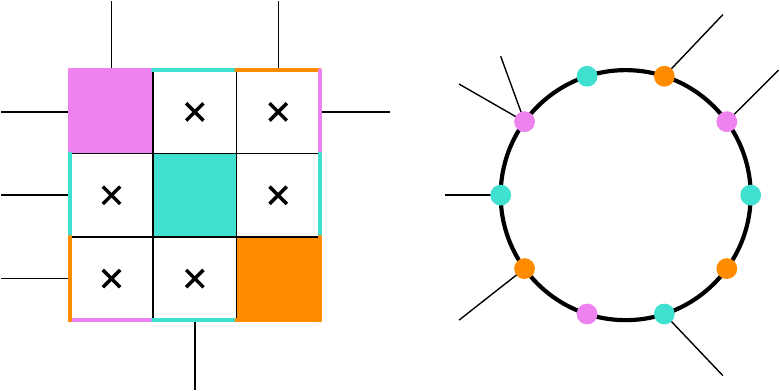} \label{fi:NodeTrix}}
%%	\hfil
%	\subfigure[] {\includegraphics[width=0.29\textwidth]{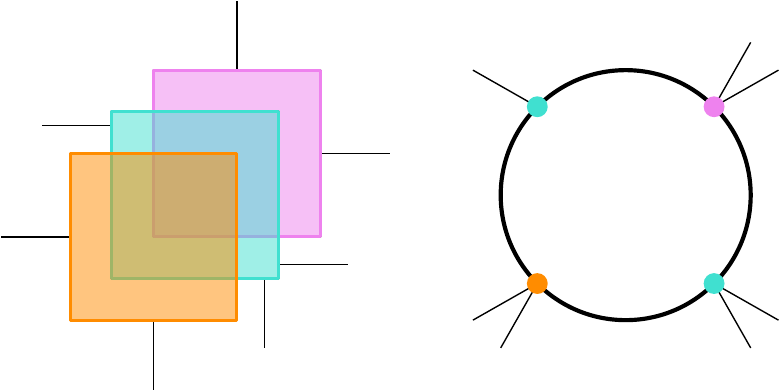} \label{fi:intersection-link}}
%	\caption{(a) A NodeTrix representation of a $3$-clique and a corresponding $(3,4)$ representation. (b) An intersection-link representation of a $3$-clique and a corresponding $(3,2)$ representation.} \label{fi:NodeTrix-intersectionlink}
%    \label{NodeTrix_intersection-link}
%\end{wrapfigure}

Different representation paradigms for clusters give rise to different types of hybrid representations. For example, Angelini et al.~\cite{DBLP:conf/gd/AngeliniLBFPR15} introduce \emph{intersection-link} representations, where clusters are represented as intersection graphs of sets of rectangles, while Henry et al.~\cite{hfm-dhvsn-07} introduce \emph{NodeTrix} representations, where dense subgraphs are represented as adjacency matrices (see Fig.~\ref{fi:NodeTrix-intersectionlink}).
Batagelj et al. employ hybrid representations in the \emph{$(X,Y)$-clustering} model~\cite{Batagelj_Visual_2011}, where $Y$ and $X$ define the desired topological properties of the clusters and of the graph connecting the clusters, respectively. For instance, in a $(planar, k$-$clique)$-clustering of a graph each cluster is a $k$-clique and the graph obtained by contracting each cluster into a single node (called the \emph{graph of clusters}) is planar.
%Batagelj et al.~\cite{Batagelj_Visual_2011} combine a node-link representation of the graph of clusters (the ``$X$ part'') with alternative representations (adjacency matrices, circular layout, etc.) for the clusters (the  ``$Y$ part'').
Given a graph $G$ and a hybrid representation paradigm $\mathcal{P}$, the \emph{hybrid planarity} problem asks whether $G$ can be represented according to $\mathcal{P}$ with no inter-cluster edge crossings. Variants of the problem may or may not assume that the clustering is given as part of the input.
%Da Lozzo et al.~\cite{DBLP:conf/gd/LozzoBFP16} prove that the problem of testing whether a clustered graph admits a planar NodeTrix representation is NP-complete, while Di Giacomo et al.~\cite{dlpt-ntptsc-17} study NodeTrix planarity when the maximum size of the clusters is bounded by a constant $k$.
%They prove that if the graph of clusters is a partial $2$-tree, the problem can be solved in $O(k^{3k+\frac{3}{2}}n^3)$ time, while for general graphs of clusters the problem is NP-complete for $k \geq 3$ and can be solved in $O(n^3)$ time for $k=2$. These results apply in the \emph{fixed sides scenario} (i.e. for each matrix it is specified which edges must be incident on each of the four sides), while in the \emph{free sides scenario} the problem remains NP-complete for any $k>4$.

\begin{figure}[tb]
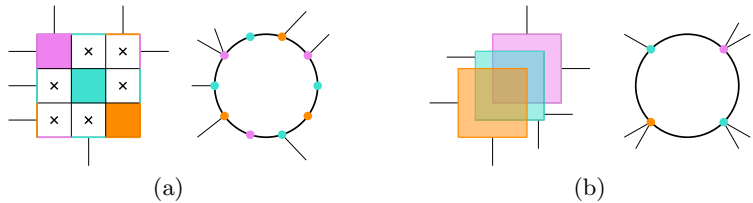

	\centering
	\subfigure[] {\includegraphics[width=0.35\textwidth]{NodeTrix} \label{fi:NodeTrix}}
	\hfil
	\subfigure[] {\includegraphics[width=0.35\textwidth]{intersection-link} \label{fi:intersection-link}}
	\caption{(a) A NodeTrix representation of a $3$-clique and a corresponding $(3,4)$ representation. (b) An intersection-link representation of a $3$-clique and a corresponding $(3,2)$ representation.} \label{fi:NodeTrix-intersectionlink}
    \label{NodeTrix_intersection-link}
\end{figure}

In this paper, we present a general hybrid representation paradigm that relaxes the described hybrid paradigms. Given a graph $G = (V,E)$, a \emph{(k,p) representation} $\Gamma$ of $G$ is a hybrid representation in which: (i) each cluster of $G$ contains at most $k$ vertices and is identified with a closed, bounded planar region; (ii) cluster regions are pairwise disjoint, (iii) each vertex $v \in V$ is represented by at most $p$ distinct points, called \emph{ports}, on the boundary of its cluster region; (iv) each inter-cluster edge $(u,v) \in E$ is represented by a Jordan arc connecting a port of $u$ to a port of $v$. A $(k,p)$ representation is \emph{$(k,p)$-planar} if edge curves do not cross and do not intersect cluster regions except at their endpoints. We say that a graph $G$ is \emph{$(k,p)$-planar} if it can be clustered so that it admits a $(k,p)$-planar representation.

The definition of a $(k,p)$ representation leaves the representation of clusters and intra-cluster edges intentionally unspecified. It is thus a relaxation of hybrid representation paradigms where the number of ports used by the inter-cluster edges depends on the geometry of the cluster regions. For example, in a NodeTrix representation, the squared boundary of each matrix allows four ports for every vertex except for the vertex in the first row/column of the matrix and the vertex in the last row/column of the matrix, which both have only three ports. Hence, a NodeTrix representation can be regarded as a constrained $(k,4)$ representation (four ports for every vertex except for two, the vertices appear in the order imposed by the matrix); see Fig.~\ref{fi:NodeTrix}. Similarly, a $(k,2)$ representation relaxes an intersection-link representation with clusters represented as isothetic unit squares with their upper-left corners along a common line with slope $1$; see Fig.~\ref{fi:intersection-link}. We also remark that the use of different ports to represent a vertex can be regarded as an example of \emph{vertex splitting}~\cite{DBLP:conf/gd/EadesM95,Eppstein2018}; however, while in the papers that use vertex splitting to remove crossings the multiple copies of each vertex can be placed anywhere in the drawing, in our model they are forced to lay within the boundary of the same cluster region.

The results of this paper are the following:

\begin{itemize}
	\item
    %We study the maximum edge density of $(k,p)$-planar graphs .
In Section~\ref{se:density}, we give an upper bound on the edge density of a $(k,p)$-planar graph and  prove that this bound is tight for $p < k$.
	
%	\item We study the complexity of recognizing $(k,p)$-planar graphs (Section~\ref{se:testing}). We prove that  $(4,1)$-planarity testing and $(2,2)$-planarity testing are both NP-complete problems.
\item In Section~\ref{se:testing}, we observe that the class of $(4,1)$-planar graphs coincides with the class of IC-planar graphs, from which the NP-completeness of testing $(4,1)$-planarity follows.
We then prove that testing $(2,2)$-planarity is NP-complete. These results imply that computing the minimum $k$ such that a graph is $(k,p)$-planar is NP-hard for both $p=1$ and $p=2$.
Recall that a graph is \emph{1-planar} if it admits a drawing where every edge is crossed at most once, and that an \emph{IC-planar} graph is a $1$-planar graph that admits a drawing where no two pairs of crossing edges share a vertex.

\item The NP-completeness of the $(2,2)$-planarity testing problem naturally suggests to further investigate the combinatorial properties of $(2,2)$-planar graphs. In Section~\ref{se:1-planar}, we ask whether every $1$-planar graph admits a $(2,2)$-planar representation (see, e.g. Fig.~\ref{fi:1-planar-replacement}). We prove the existence of $1$-planar graphs that are not $(2,2)$-planar and of $(2,2)$-planar graphs that are not $1$-planar. We also give a sufficient condition for $1$-planar graphs to be $(2,2)$-planar.

%\item In Section~\ref{se:1-planar}, we further explore the combinatorial properties of $(2,2)$-planar graphs, demonstrating the existence of 1-planar graphs that are not $(2,2)$-planar and $(2,2)$-planar graphs that are not 1-planar. We also give a sufficient condition for $1$-planar graphs to be $(2,2)$-planar.

\end{itemize}

% Preliminary definitions are in Section~\ref{se:preliminaries}, while some open problems are discussed in Section~\ref{se:open-problems}.
Sections of certain proofs are removed to the appendix. These statements are marked with [*].

\section{Edge Density of $(k,p)$-Planar Graphs} \label{se:density}

In this section we give a tight bound on the number of edges of a $(k,p)$-planar graph when $p<k$. First, given a $(k,p)$-planar representation $\Gamma$, we define a \emph{skeleton} of $\Gamma$ to be a planar drawing $\Gamma_S$ obtained by the following transformation. We first replace each port in $\Gamma$ with a vertex. Each cluster region $R_i$ of $\Gamma$ is now an empty convex space surrounded by up to $kp$ vertices. We connect these vertices in a cycle and triangulate the interior. For our purposes any triangulation is equivalent. The resulting representation is $\Gamma_S$. Figure~\ref{fi:cluster-skeletonized-b} illustrates a skeleton of the $(2,2)$-planar representation of Fig.~\ref{fi:cluster-skeletonized-a}.
%We say that a graph is \emph{maximal $(k,p)$-planar} if the addition of any edge results in a graph that is not $(k,p)$-planar.

%The following property immediately derives from the fact that every cluster can be a clique.
%
%\begin{property} \label{pr:intra-cluster}
%	For any positive integers $k$ and $p$, a $(k,p)$-planar graph with $N$ clusters has at most $\frac{k(k-1)N}{2}$ intra-cluster edges.
%\end{property}

\begin{figure}[tb]
	\centering
	\subfigure[] {\includegraphics[width=0.28\textwidth,page=4]{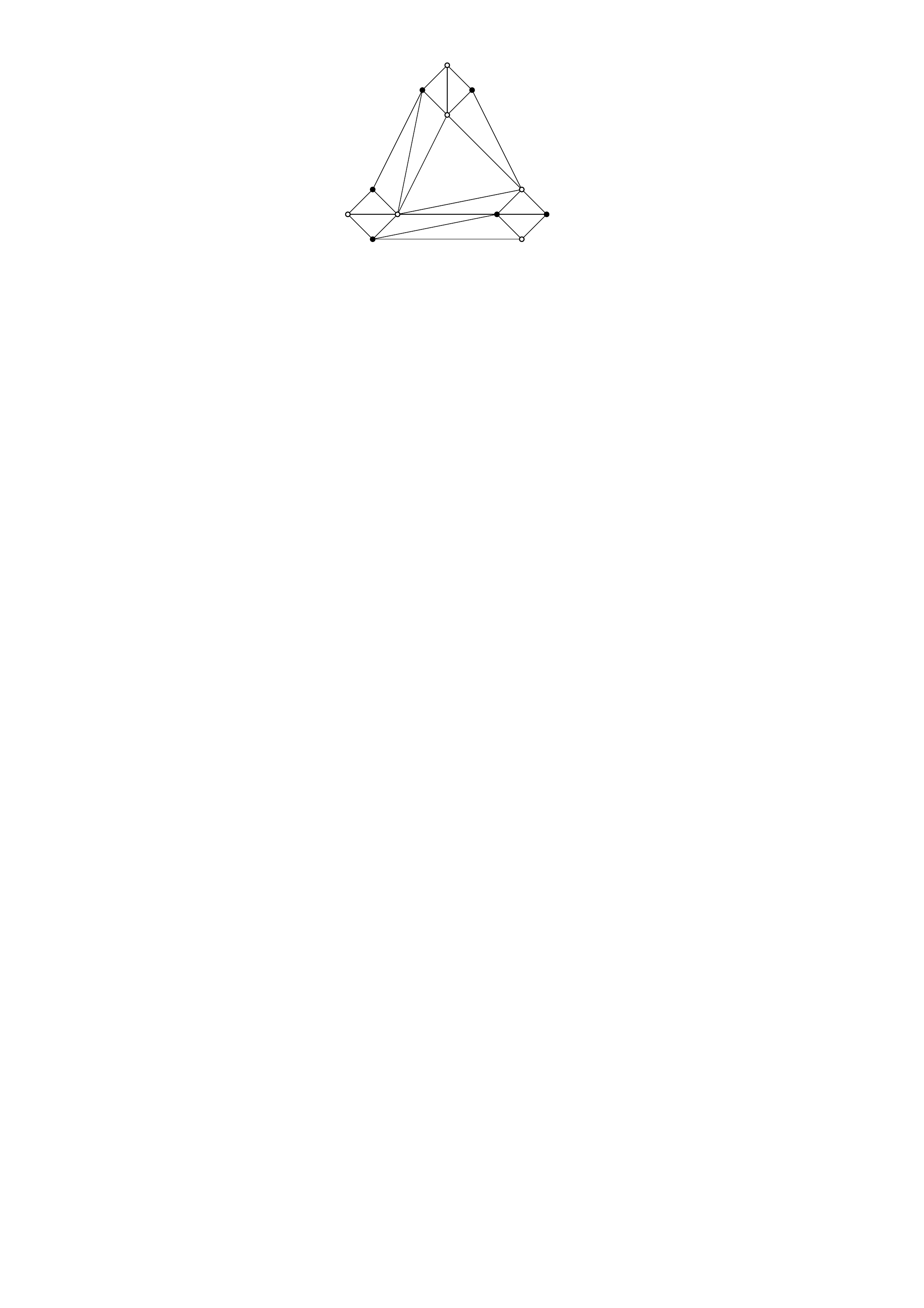} \label{fi:cluster-skeletonized-a}}
	\hfil
	\subfigure[] {\includegraphics[width=0.28\textwidth,page=2]{cluster-skeletonized} \label{fi:cluster-skeletonized-b}}
	\caption{(a) A $(2,2)$-planar representation $\Gamma$ of a graph $G$; (b) A skeleton $\Gamma_S$ of $\Gamma$.} \label{fi:cluster-skeletonized}
\end{figure}

\begin{theorem}\label{thm:edgebound2}\emph{\textbf{[*]}}
	Let $G$ be a $(k,p)$-planar graph with $n$ vertices. $G$ has $m \leq n(p+\frac{3}{k}+\frac{k}{2}-\frac{1}{2})-6$ edges. This bound is tight for any positive integers $k$, $p$ and $n$ such that $p<k$ and $n=N \cdot k$, where $N>2$.
\end{theorem}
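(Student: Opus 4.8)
The plan is to prove the upper bound for every $(k,p)$-planar graph and then give a matching construction under the hypotheses $p<k$ and $n=Nk$ with $N>2$. For the upper bound I would work with the skeleton $\Gamma_S$ of a $(k,p)$-planar representation of $G$. Write $C_1,\dots,C_N$ for the clusters, $n_j=|C_j|$, and $t_j$ for the number of ports of $C_j$, so that $\sum_j n_j=n$, $n_j\le k$, and $t_j\le p\,n_j$. Since $\Gamma_S$ is planar, $|E(\Gamma_S)|\le 3|V(\Gamma_S)|-6=3\sum_j t_j-6$; on the other hand $|E(\Gamma_S)|$ equals the number $m_{\mathrm{inter}}$ of inter-cluster edges of $G$ plus the filler edges inside the cluster regions, and surrounding $C_j$ by a $t_j$-cycle and triangulating its interior contributes at least $2t_j-3$ of them. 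Hence $m_{\mathrm{inter}}\le\sum_j t_j+3N-6\le pn+3N-6$, and adding the trivial estimate $m_{\mathrm{intra}}\le\sum_j\binom{n_j}{2}$ gives $m\le\sum_j\bigl(p\,n_j+3+\binom{n_j}{2}\bigr)-6$. It then suffices to verify the per-cluster inequality $p\,x+3+\binom x2\le x\bigl(p+\tfrac3k+\tfrac k2-\tfrac12\bigr)$ for $x=n_j$; after cancelling $p\,x$ it reads $3+\tfrac{x^2}{2}\le\tfrac{3x}{k}+\tfrac{xk}{2}$, whose right-minus-left side is concave in $x$, vanishes at $x=k$, and is nonnegative on $[2,k]$, so the estimate is settled for all clusters of size at least two.

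\textbf{Singleton clusters.} These are the one delicate case, since for small $k$ the per-cluster inequality can fail at $x=1$. Here I would use that the at-most-$p$ ports of a singleton cluster form a triangulated polygon, hence a connected subgraph of $\Gamma_S$; contracting each such polygon to a single vertex turns $\Gamma_S$ into a planar graph $\Gamma_S^{\star}$ that is still simple, because $G$ is simple and distinct edges of $G$ have distinct port-endpoints, so no two inter-cluster edges become parallel. Applying $|E|\le 3|V|-6$ to $\Gamma_S^{\star}$ and repeating the bookkeeping with the $s$ singleton clusters now contributing one vertex each gives $m\le 3s+\sum_{n_j\ge2}n_j\bigl(p+\tfrac3k+\tfrac k2-\tfrac12\bigr)-6$, and since $\sum_{n_j\ge2}n_j=n-s$ and $p+\tfrac3k+\tfrac k2-\tfrac12\ge 3$ for all $p\ge1$, $k\ge1$, this is at most $n\bigl(p+\tfrac3k+\tfrac k2-\tfrac12\bigr)-6$.

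\textbf{Tightness.} Fix $p<k$ and $n=Nk$ with $N>2$ and build the extremal graph through its skeleton: a planar triangulation $H$ on $Npk$ vertices whose vertex set partitions into $N$ vertex-disjoint cluster polygons, each a triangulated $pk$-gon with its $pk$ vertices on the boundary. One obtains such an $H$ from any planar triangulation on $N\ge3$ vertices by inflating each vertex into a $pk$-gon and then triangulating both the polygon interiors and the regions between polygons. Declaring the polygons to be the cluster regions, assigning $p$ of the $pk$ boundary vertices of each polygon to each of its $k$ cluster-vertices, letting the inter-cluster edges of $G$ be exactly the edges of $H$ not interior to a polygon, and completing each cluster to a $k$-clique produces a $(k,p)$-planar graph with $(3Npk-6)-N(2pk-3)+N\binom k2=Npk+3N+N\binom k2-6=n\bigl(p+\tfrac3k+\tfrac k2-\tfrac12\bigr)-6$ edges. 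The hypothesis $N>2$ lets the polygons be arranged so that only about $pk$ inter-cluster edges run between any two clusters (a two-cluster arrangement would force roughly $2pk$), and $p<k$, i.e.\ $pk<k^2$, leaves enough freedom in the port assignment for all those edges to join pairwise-distinct vertex pairs, so that $G$ is a simple graph.

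\textbf{Main obstacle.} The step I expect to be hardest is the last one in the tightness part: constructing the triangulation with the prescribed polygon decomposition and, above all, choosing the port assignment so that the resulting graph is genuinely simple. By comparison the upper bound is essentially bookkeeping once the skeleton is in place, the only subtlety being the separate treatment of singleton clusters.
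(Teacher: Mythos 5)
Your upper-bound half is sound, and it is a mildly different (arguably cleaner) bookkeeping of the same skeleton idea the paper uses: you bound $m_{inter}\le \sum_j t_j+3N-6$ from planarity of the skeleton, then close the argument cluster-by-cluster via the inequality $px+3+\binom{x}{2}\le x\bigl(p+\tfrac3k+\tfrac k2-\tfrac12\bigr)$ for $2\le x\le k$ (correct, by concavity and equality at $x=k$, using that $k$ is an integer), and handle singletons by contracting their port polygons and absorbing the resulting $3s$ term with $p+\tfrac3k+\tfrac k2-\tfrac12\ge 3$. The paper instead treats all deficient clusters by an augmentation argument (Appendix~\ref{apx:edgebound2}): pad clusters of size $1<k_i<k$ with dummy vertices and ports, delete singletons, and propagate the bound backwards through a sequence of representations. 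Both rest on the same inequality~(\ref{eq:9}), so this half is fine.

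The tightness half, however, has a genuine gap, and you flag it yourself under ``Main obstacle.'' Declaring the inter-cluster edges of $G$ to be ``the edges of $H$ not interior to a polygon'' yields the count $Npk+3N-6$ \emph{and} a simple graph only if (i) the exterior triangulation contains no chord joining two ports of the same polygon (such an edge cannot be an inter-cluster edge and would spoil the count), and (ii) no two exterior edges of $H$ attach to port pairs representing the same pair of cluster-vertices, since otherwise $G$ acquires parallel edges. Saying that $N>2$ and $pk<k^2$ ``leave enough freedom'' is a feasibility count, not a construction: how the $Npk+3N-6$ exterior edges distribute among polygon pairs depends on the triangulation you have not specified, and the port assignment (exactly $p$ ports per vertex) must be coordinated with it. This is exactly where the paper does the real work and where $p<k$ is actually used: it builds an explicit $kp$-connection gadget placing $kp+1$ edges between two consecutive clusters so that they realize $kp+1$ distinct vertex pairs, using $p+1$ consecutive ports on the small end and $p(k-1)+1$ on the large end; it then arranges the $N$ clusters in a cycle, each cluster serving once as small end and once as large end, and triangulates the two resulting faces of degree $N$ with $2(N-3)$ further single edges, giving $(kp+3)N-6$ inter-cluster edges plus the $k$-clique inside each cluster. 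Without an explicit gadget of this kind, or an equally concrete triangulation-plus-port-assignment for your inflated construction, the matching lower bound is not established.
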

%\begin{theorem} \label{thm_edgebound2}
%	Let $G$ be a $(k,p)$-planar graph with $n$ vertices. $G$ has $m \leq n(p+\frac{3}{k}+\frac{k}{2}-\frac{1}{2})-6$ edges.
%\end{theorem}
\begin{proof}
	Let $\Gamma$ be a $(k,p)$-planar representation of $G$ and let $N$ be the number of clusters of $G$. As each cluster contains at most $k$ vertices, $G$ has at most $N \cdot \frac{k(k-1)}{2}$ intra-cluster edges.

    Let $R_i$ be a cluster region in $\Gamma$ with $p_i$ ports in total. Let $\Gamma_S$ be a skeleton of $\Gamma$, and let $n_S$ and $m_S$ denote the number of vertices and the number of edges of $\Gamma_S$, respectively. When $\Gamma_S$ is created, $R_i$ is replaced with $p_i$ vertices and $2p_i-3$ edges if $p_i > 1$, or 0 edges if $p_i=1$. Letting $m_{inter}$ be the number of inter-cluster edges in $G$ and $s$ be the number of clusters in $G$ containing a single vertex,  we have,
    \begin{equation}
    m_S = m_{inter} + \sum_{i=1}^N (2p_i-3) + s.
	\end{equation}
In other words, the total number of edges in $\Gamma_S$ is equal to the number of inter-cluster edges in $G$ plus the number of edges added for each cluster.
	Note that $m_S \leq 3n_S - 6$, as $\Gamma_S$ is a planar drawing. As $\sum_{i=1}^N p_i = n_S$, rearranging generates $m_{inter} + 2n_S - 3N + s \leq 3n_S - 6$ and thus,
    \begin{equation} \label{eq:9}
	m_{inter} \leq n_S + 3N - 6 - s \leq N(kp+3) - 6.
	\end{equation}
As $m$ is equal to the sum of the number of inter-cluster and intra-cluster edges in $G$, we have
	\begin{equation}
    m \leq Nk(p + \frac{3}{k} + \frac{k}{2} - \frac{1}{2}) - 6.
    \end{equation}
%ince each cluster can have at most $\frac{k(k-1)}{2}$ intra-cluster edges, $G$ has at most $\frac{k(k-1)N}{2}$ intra-cluster edges. We now show that $\Gamma$ has at most $(kp+3)N-6$ inter-cluster edges.  Let $\Gamma_S$ be a skeleton of $\Gamma$. Denote by $n_S$ and $m_S$ the number of vertices and the number of edges of $\Gamma_S$. The value $m_S$ is equal to the number of inter-cluster edges $m_{inter}$ plus the number of edges added in place of each cluster region of $\Gamma$ to create $\Gamma_S$. If a cluster region $R_i$ has $p_i$ ports, connecting the ports with a cycle and triangulating the interior cycle creates $2p_i-3$ additional edges if $p_i > 1$, or $2p_i-2$ additional edges if $p_i=1$. Letting $s$ be the number of clusters of $\Gamma$ consisting of a single vertex, so that $2 p_1 - 2 = 2 p_1 - 3 + s$, we have that $m_S = m_{inter} + \sum_{i=1}^{N} (2p_i - 3) + s$ or, regrouping, $m_S = m_{inter} + 2 \sum_{i=1}^{N} p_i - 3N + s$. Note that $\sum_{i=1}^{N} p_i =n_S$, hence $m_S = m_{inter} + 2n_S - 3N + s$. Because $\Gamma_S$ is planar, we have $m_S \leq 3n_S - 6$ and therefore
%	\begin{equation} \label{eq:9}
%	m_{inter} + 2n_S - 3N + s \leq 3n_S - 6.
%	\end{equation}
%	Since $n_S \leq kpN$, we obtain $m_{inter} \leq (kp + 3)N - 6 - s \leq (kp + 3)N - 6$.
%	
%	To obtain the desired bound, we express the value of $N$ in terms of $n$ and $k$.

%%

If all clusters contain $k$ vertices, then $N = \frac{n}{k}$ and Theorem 1 holds. Appendix \ref{apx:edgebound2} completes the proof that $m \leq n(p + \frac{3}{k} + \frac{k}{2} - \frac{1}{2}) - 6$ in the case where some clusters contain fewer than $k$ vertices.

%Otherwise, we construct a sequence $\Gamma_0,\Gamma_1,\dots,\Gamma_N$ of $(k,p)$-planar representations such that $\Gamma_0 = \Gamma$ and $\Gamma_i$ is obtained from $\Gamma_{i-1}$ by removing $R_i$ if $R_i$ contains a single vertex or adding vertices to $R_i$ until $R_i$ contains $k$ vertices otherwise.

%If all clusters have size $k$, then $N=\frac{n}{k}$ and therefore $m \leq (kp + 3)N - 6 + \frac{k(k-1)}{2}N = n(p+\frac{3}{k}+\frac{k}{2}-\frac{1}{2})-6$. If the clusters have size at most $k$, we construct a sequence $\Gamma=\Gamma_0,\Gamma_1,\dots,\Gamma_N$ of $(k,p)$-planar representations so that $\Gamma_N$ has all clusters of size $k$ and each $\Gamma_i$ is obtained from $\Gamma_{i-1}$ by adding (or removing, if $|V_i| = 1$) vertices and edges to the cluster $V_i$ ($i=1,2,\dots,N$). We then prove the following claim, where $n_i$ and $m_i$ are the number of vertices and edges of $\Gamma_i$, respectively.
	
%	\begin{restatable}{myclaim}{claimdensity}\label{cl:one}
%		If $m_i \leq  n_i(p+\frac{3}{k}+\frac{k}{2}-\frac{1}{2})-6$ then $m_{i-1} \leq  n_{i-1}(p+\frac{3}{k}+\frac{k}{2}-\frac{1}{2})-6$.
%	\end{restatable}
	
%	Claim~\ref{cl:one} together with the fact that $m_N \leq  n_N(p+\frac{3}{k}+\frac{k}{2}-\frac{1}{2})-6$ (because $\Gamma_N$ has all clusters of size $k$) implies that $m_0 \leq n_0(p+\frac{3}{k}+\frac{k}{2}-\frac{1}{2})-6$. Since $n=n_0$ and $m \leq m_0$, the upper bound on the number of edges follows.
	
	In order to show that the bound is tight for $p<k$, we describe a $(k,p)$-planar representation $\Gamma_{k,p}$ with $N=\frac{n}{k}$ clusters and $(kp + 3)N - 6$ inter-cluster edges. $\Gamma_{k,p}$ is possible for any pair of positive integers $p$ and $k$ such that $p<k$ and for any $N > 2$. $\Gamma_{k,p}$ has $N$ clusters each with $k$ vertices and thus $kp$ ports. Let $R_1$ and $R_2$ be two cluster regions. We say that $R_1$ and $R_2$ are \emph{kp-connected} if they are connected by $kp+1$ edges as shown in Fig.~\ref{fi:kp-connected}(a). (Note that, since the number of inter-cluster edges between two $k$-clusters is at most $k^2$, we can create $kp+1$ edges between $R_1$ and $R_2$ only if $p<k$). More precisely, $R_1$, which we refer to as the \emph{small end} of the $kp$-connection, is connected by means of $p+1$ consecutive ports; the first $p$ ports have $k$ incident edges each, and the last port has an additional edge. $R_2$, which we refer to as the \emph{large end} of the $kp$-connection, is connected by means of $p(k-1)+1$ consecutive ports, each connected to one or two edges. Notice that, since we use $p(k-1)+1$ ports for the large end, $p+1$ for the small end and two ports can be shared by the two ends, each cluster region can be the small end of one $kp$-connection and the large end of another $kp$-connection. Thus, we can create a cycle with $N$ clusters as shown in Fig.~\ref{fi:kp-connected}(b). In the resulting representation there are two faces of degree $N$: One is the outer face and the other one is inside the cycle. By triangulating these two faces with $N-3$ edges for each face, we obtain the $(k,p)$-representation $\Gamma_{k,p}$. The number of inter-cluster edges of $\Gamma_{k,p}$ is thus $(kp+1)N+2N-6=(kp+3)N-6$. \qed
   %while the number of intra-cluster edges is $\frac{k(k-1)}{2}N$ (each cluster has size $k$). \qed
% Thus, $m=(kp + 3)\frac{n}{k} - 6+\frac{k(k-1)}{2}\frac{n}{k}=n(p+\frac{3}{k}+\frac{k}{2}-\frac{1}{2})-6$.\qed
\end{proof}

	\begin{figure}[tb]
		\centering
		\subfigure[]{\includegraphics[width=0.38\textwidth, page=5]{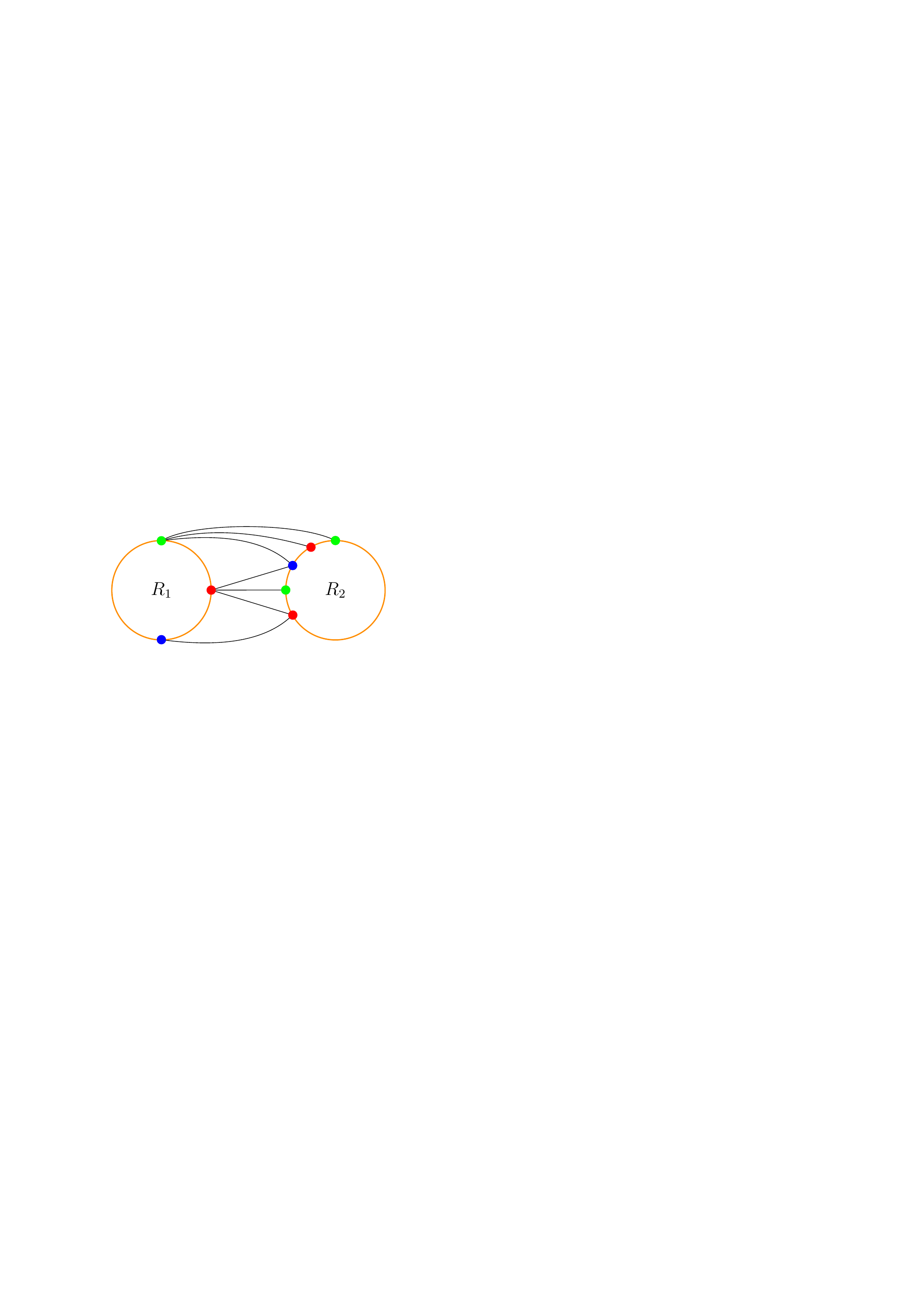}}
		\hfil
		\subfigure[]{\includegraphics[width=0.38\textwidth, page=6]{kp-connected}}	
		\caption{(a) A $kp$-connection of two cluster regions $R_1$ and $R_2$ ($k=5$, $p=3$). (b) A cycle of $N=5$ clusters; the bold edges highlight the two faces of degree $N$.} \label{fi:kp-connected}
	\end{figure}

\section{Recognition of $(k,p)$-Planar Graphs} \label{se:testing}

This section considers the problem of testing $(k,p)$-planarity for the cases in which $p=1$ and $p=2$.

%In this section, we study the $(k,p)$-planarity testing problem. It is not difficult to see that $(k,1)$-planarity for $k\leq 3$ is equivalent to standard planarity, while for $k=4$ it is equivalent to IC-planarity, whose test is known to be NP-complete~\cite{DBLP:journals/tcs/BrandenburgDEKL16}. We recall that \emph{IC-planar graphs} are $1$-planar graphs with a $1$-planar embedding such that no two pairs of crossing edges share a vertex~\cite{Zhang2013}. As a consequence of the NP-completeness of $(4,1)$-planarity, the problem of computing the minimum $k$ such that a graph is $(k,p)$-planar for $p=1$ is NP-hard.

%In this section we prove that the $(k,1)$-planarity testing problem can be solved in linear time for $k \leq 3$ and it is $NP$-complete for $k=4$, and that the $(2,2)$-planarity testing problem is $NP$-complete. As a consequence, the problem of computing the minimum $k$ such that a graph is $(k,p)$-planar is NP-hard both for $p=1$ and $p=2$. An \emph{IC-planar} graph is a $1$-planar graph that has a $1$-planar embedding such that no two pairs of crossing edges share a vertex~\cite{Zhang2013}.

\begin{restatable}{theorem}{theoremlinear}\label{th:$(3,1)$-linear}\emph{\textbf{[*]}}
	$(k,1)$-planarity testing can be performed in linear time for $k \leq 3$, and it is NP-complete for $k=4$.
\end{restatable}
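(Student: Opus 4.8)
The proof splits into two essentially independent parts, so the plan is to handle them in turn. For $k\le 3$ I would show that the $(k,1)$-planar graphs are exactly the planar graphs, so that testing reduces to an ordinary planarity test, which runs in linear time and moreover returns a planar embedding, i.e.\ a $(k,1)$-planar representation with all clusters singletons. One inclusion is immediate: every planar graph is $(k,1)$-planar via singleton clusters. For the converse I would take an arbitrary $(k,1)$-planar representation $\Gamma$ with $k\le 3$ and rebuild a planar drawing of $G$. Each cluster region is a disk whose boundary carries at most three ports, one per vertex of the cluster, and the intra-cluster edges are not drawn in $\Gamma$; I add each such edge as a chord of the appropriate disk. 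The crucial observation is that among at most three points on the boundary of a disk every two chords share an endpoint, so all chords inside one disk can be drawn pairwise non-crossing. After inserting all intra-cluster edges this way I contract each cluster region to a tiny disk; since the regions are pairwise disjoint and every inter-cluster edge meets them only at ports, no crossings are created elsewhere, and the result is a crossing-free drawing of $G$. This direction is short, and its only delicate point --- the statement about three points on a disk --- is exactly what fails for four ports.

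For $k=4$ the plan is to prove that the $(4,1)$-planar graphs coincide with the IC-planar graphs. For the inclusion $(4,1)$-planar $\subseteq$ IC-planar I would again fill in the intra-cluster edges of a given $(4,1)$-planar representation: inside a disk with at most three ports they form a subgraph of $K_3$ and can be drawn crossing-free, while inside a disk with four ports they form a subgraph of $K_4$ and can be drawn with at most one crossing of two chords. The resulting drawing of $G$ has all of its crossings inside size-$4$ cluster regions, at most one per region; hence it is $1$-planar, and since the two edges of each crossing lie inside a single size-$4$ region and therefore have all four endpoints among the four vertices of that cluster, and distinct clusters are vertex-disjoint, no two crossing pairs share a vertex. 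So the drawing certifies that $G$ is IC-planar.

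For the reverse inclusion IC-planar $\subseteq$ $(4,1)$-planar I would start from an IC-planar drawing of $G$. Its crossed edges form a matching, so the crossings partition a subset of $V$ into pairwise disjoint $4$-sets, each consisting of the (distinct) endpoints of two crossing edges $ab$ and $cd$. I make each such $4$-set a cluster and every remaining vertex a singleton cluster. As the cluster region for the crossing of $ab$ and $cd$ I take a sufficiently thin neighborhood of $ab\cup cd$: by $1$-planarity, $ab$ and $cd$ are involved in no other crossing, so this neighborhood is a cross-shaped topological disk that contains the two edges $ab$ and $cd$ (now intra-cluster) and no other part of the drawing, and I place a port on its boundary near each of $a,b,c,d$. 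For distinct crossings the corresponding edge pairs are pairwise disjoint, so sufficiently thin neighborhoods are pairwise disjoint, and also disjoint from tiny disks placed around the singleton clusters. Every edge of $G$ that was not crossed becomes an inter-cluster edge, and it can be rerouted within a small neighborhood of each of its endpoints so as to emanate from the correct port, without creating crossings. This produces a $(4,1)$-planar representation of $G$, completing the equivalence.

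Given the equivalence, NP-hardness of $(4,1)$-planarity testing follows from the known NP-completeness of recognizing IC-planar graphs (due to Brandenburg), and membership in NP is clear since a $(4,1)$-planar representation is specified by a clustering together with a planar embedding of a skeleton, which is checkable in polynomial time. I expect the main obstacle to be the topological bookkeeping in the direction IC-planar $\Rightarrow$ $(4,1)$-planar: guaranteeing that the cluster regions are genuinely pairwise disjoint, that a thin neighborhood of $ab\cup cd$ is a disk enclosing no other part of the drawing, and that rerouting the uncrossed edges to their ports introduces no new crossings. The remaining parts --- the $k\le 3$ case and the direction $(4,1)$-planar $\Rightarrow$ IC-planar --- should be routine once the two chord-drawing observations are stated precisely.
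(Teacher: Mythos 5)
Your proposal is correct and follows essentially the same route as the paper: for $k\le 3$ it shows $(k,1)$-planar graphs coincide with planar graphs (reducing to linear-time planarity testing), and for $k=4$ it establishes the equivalence with IC-planar graphs in both directions and invokes the known NP-completeness of IC-planarity testing. Your version merely spells out the topological details (chords in cluster disks, thin neighborhoods of crossing pairs) more explicitly than the paper's appendix does.
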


\begin{proof}
The first part of Theorem \ref{th:$(3,1)$-linear} follows from the fact that the class of $(k,1)$-planar graphs coincides with the class of planar graphs for $k=1,2,3$. The second part follows from the fact that the $(4,1)$-planar graphs coincide with the IC-planar graphs~\cite{Zhang2013}. Testing IC-planarity is known to be NP-complete ~\cite{DBLP:journals/tcs/BrandenburgDEKL16}. Appendix B proves both equivalencies. \qed
\end{proof}

\begin{corollary}\label{co:hardness-p-1}
The problem of computing the minimum value of $k$ such that a graph is $(k,1)$-planar is NP-hard.
\end{corollary}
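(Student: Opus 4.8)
The plan is to give a polynomial-time Turing reduction from IC-planarity testing, which is NP-complete~\cite{DBLP:journals/tcs/BrandenburgDEKL16}, to the problem of computing the minimum $k$ for which a graph is $(k,1)$-planar. The reduction rests entirely on the two equivalences recalled in Theorem~\ref{th:$(3,1)$-linear}, so no new combinatorial machinery is needed.

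First I would argue that the quantity $\mu(G) := \min\{k : G \text{ is } (k,1)\text{-planar}\}$ is well defined for every $n$-vertex graph $G$: putting all vertices of $G$ into a single cluster leaves no inter-cluster edge to route, so $G$ is $(n,1)$-planar and hence $1 \le \mu(G) \le n$. I would also note that $(k,1)$-planarity is monotone in $k$ --- since clusters are required to have size \emph{at most} $k$, a $(k,1)$-planar representation is also a $(k',1)$-planar representation for all $k' \ge k$ --- so $G$ is $(k,1)$-planar precisely when $k \ge \mu(G)$.

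Next I would combine this with Theorem~\ref{th:$(3,1)$-linear}: the $(k,1)$-planar graphs coincide with the planar graphs for $k \le 3$ and with the IC-planar graphs for $k = 4$. Since every planar graph is IC-planar, we obtain the equivalence $G \text{ is IC-planar} \iff \mu(G) \le 4$: if $G$ is planar then $\mu(G) = 1 \le 4$; if $G$ is IC-planar but not planar then $\mu(G) = 4$; and if $G$ is not IC-planar then $\mu(G) \ge 5$. Hence any procedure that computes $\mu(G)$ decides IC-planarity after a single comparison, and therefore computing $\mu(G)$ is NP-hard.

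I do not expect a genuine obstacle here: the statement is an immediate consequence of Theorem~\ref{th:$(3,1)$-linear}. The only points that need care are the well-definedness and the monotonicity of $\mu$, which are what justify phrasing the reduction as the single test ``$\mu(G) \le 4$'', together with the elementary remark that planar graphs are IC-planar, so that the planar case need not be treated separately.
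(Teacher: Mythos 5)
Your proposal is correct and follows essentially the same route as the paper, which treats the corollary as an immediate consequence of Theorem~\ref{th:$(3,1)$-linear}: an algorithm computing the minimum $k$ decides $(4,1)$-planarity, i.e.\ IC-planarity, via the threshold test $k\le 4$, and IC-planarity testing is NP-complete. The only difference is that you make explicit the well-definedness and monotonicity of the minimum, which the paper leaves implicit; these observations are correct and harmless.
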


We now focus on the $(2,2)$-planarity testing problem, hereafter referred to as {\sc $(2,2)$-Planarity}. We show that {\sc $(2,2)$-Planarity} is NP-complete by a reduction from the NP-complete problem {\sc Planar Monotone 3-SAT}~\cite{DBLP:journals/ijcga/BergK12}.
%Also in this case, a consequence of this result is that minimizing $k$ such that a graph is $(k,2)$-planar is NP-hard.
We say that an instance of 3-SAT is \emph{monotone} if every clause consists solely of positive literals (a \emph{positive clause}) or solely of negative literals (a \emph{negative clause}). A \emph{rectilinear representation} of a 3-SAT instance is a planar drawing where each variable and clause is represented by a rectangle, all the variable rectangles are drawn along a horizontal line, and vertical segments connect clauses with their constituent variables. A rectilinear representation is \emph{monotone} if it corresponds to a monotone instance of planar 3-SAT where positive clauses are drawn above the variables and negative clauses are drawn below the variables, as shown in Fig.~\ref{fi:x_0}. Given a monotone rectilinear representation $\Phi$ corresponding to a boolean formula $F$, the problem {\sc Planar Monotone 3-SAT} asks if $F$ has a satisfying assignment.

\begin{figure}[tb]
	\centering
	\subfigure[]{\includegraphics[width=0.32\textwidth]{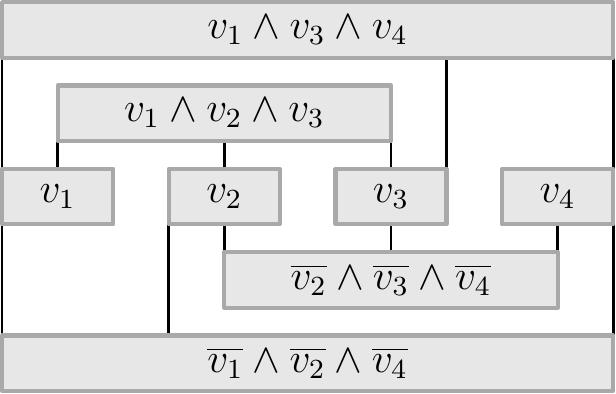}\label{fi:x_0}}
	\hfil
	\subfigure[]{\includegraphics[width=0.53\textwidth]{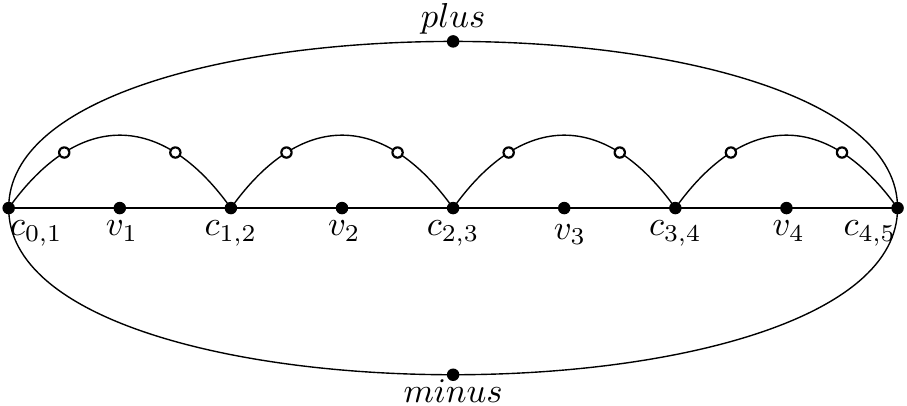}\label{fi:variable_cycle}}
	\caption{(a) A planar monotone representation of $\Phi_0$. (b) The variable cycle of $G_0$ and false literal boundaries.}
\end{figure}
	
We denote by $K^-_{8}$ the graph created by removing two adjacent edges from the complete graph $K_8$. In our reduction we
make use of the following transformation.  Let $v$ be a vertex of $G$. we replace $v$ with a copy of $K^-_{8}$ by identifying $v$ with the vertex of $K^-_{8}$ with degree $5$. After performing this operation we say that $v$ is a $K$-vertex. The following lemma, whose proof is in Appendix~\ref{apx:Kvertex}, states a useful property of the $K$-vertices.
	
	\begin{restatable}{lemma}{lemmaKvertex}\label{le:Kvertex}\emph{\textbf{[*]}}
%		Any $(2,2)$-planar representation of a $K$-vertex $v$ clusters its associated $K^-_{8}$ subgraph into four $2$-clusters.
		Let $v$ be a $K$-vertex of a graph $G$ and let $G'$ be the $K^-_{8}$ subgraph associated with $v$. In any $(2,2)$-planar representation of $G$, each vertex of $G'$ is clustered with another vertex in $G'$.
	\end{restatable}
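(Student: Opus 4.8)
The plan is to exploit the edge-density bound of Theorem~\ref{thm:edgebound2} applied to the $8$-vertex subgraph $G'$, combined with a counting argument on how $G'$'s vertices can be distributed among clusters. Since $G' = K_8^-$ has $8$ vertices and $\binom{8}{2}-2 = 26$ edges, any $(2,2)$-planar representation of $G$ in particular induces a $(2,2)$-planar representation of $G'$ (by restricting to the cluster regions and ports touched by $G'$ and deleting everything else — this only removes edges and vertices, so it cannot destroy $(2,2)$-planarity). First I would argue that in such a representation every vertex of $G'$ lies in a cluster consisting solely of vertices of $G'$: if some vertex $w$ of $G'$ were clustered with a vertex outside $G'$, then within the induced representation of $G'$ that cluster would effectively be a singleton (a $1$-cluster) with respect to $G'$, and more importantly we will get a contradiction by bounding edges. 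So the real content is: in any $(2,2)$-planar representation of the $8$-vertex graph $K_8^-$, no cluster can be a singleton.

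The key step is the edge count. Suppose, for contradiction, that in a $(2,2)$-planar representation of $K_8^-$ some cluster is a singleton. Then the eight vertices are partitioned into clusters of size at most $2$, with at least one cluster of size $1$; hence there are at least $N \geq 5$ clusters. I would then re-run the skeleton argument from the proof of Theorem~\ref{thm:edgebound2} directly on this configuration rather than quoting the clean bound: the number of inter-cluster edges satisfies $m_{\mathrm{inter}} \leq n_S + 3N - 6 - s$, where $n_S = \sum p_i \leq 2\cdot 8 = 16$ is the total number of ports, $N$ is the number of clusters, and $s \geq 1$ is the number of singleton clusters. The number of intra-cluster edges is at most one per size-$2$ cluster, i.e.\ at most $N - s$ (actually at most the number of non-singleton clusters, which is $\leq 8 - \lceil \text{something}\rceil$; being careful here is exactly the point). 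Combining, $26 = |E(K_8^-)| \leq (n_S + 3N - 6 - s) + (N-s) = n_S + 4N - 6 - 2s$. Plugging the extreme feasible values ($n_S \leq 16$, and with $s\geq 1$ we have $N \leq 7$ since the eight vertices in clusters of size $\leq 2$ with at least one singleton force at most $7$ clusters... wait — with a singleton, the remaining $7$ vertices in pairs give at most $3$ pairs plus a leftover singleton, so $N \leq 4+1$? I must recount: $8$ vertices, one forced singleton leaves $7$, which split into at most $3$ pairs and one more singleton, so actually $s\geq 2$ and $N \leq 5$), we get $26 \leq 16 + 20 - 6 - 4 = 26$, i.e.\ equality is forced, and then I would show equality is impossible — e.g.\ because it requires every cluster region to contribute its full port complement and the skeleton to be a triangulation simultaneously with the singleton-count and intra-edge-count all tight, which over-constrains the rotation/port structure. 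Alternatively, and more cleanly, I would check the slightly sharper inequality obtained by not discarding $m_S \le 3n_S-6$ slack and instead observing $K_8^-$ is not planar (it contains $K_6$), so we actually need $n_S \geq$ something larger, pushing past $16$ — contradiction.

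The main obstacle I expect is getting the bookkeeping on $N$, $s$, and the intra-cluster edge count exactly right so the inequality is genuinely violated rather than merely tight; the clean statement of Theorem~\ref{thm:edgebound2} requires $p<k$ (here $p=k=2$) and $n$ a multiple of $k$, so I cannot cite it verbatim and must reopen the skeleton computation, tracking singleton clusters honestly. A secondary subtlety is the reduction from ``$G$'s representation'' to ``$G'$'s representation'': I must make sure that passing to the subgraph and then to its induced sub-representation does not quietly merge cluster regions or change port counts in a way that invalidates the skeleton count — but since deleting vertices and edges only deletes ports and arcs and leaves cluster regions disjoint and convex, the skeleton construction still applies verbatim, so this step is routine. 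Once the contradiction is established, the lemma follows: no vertex of $G'$ can share its cluster with a vertex outside $G'$ (that would make it a $G'$-singleton), and no $G'$-cluster is a singleton, so every vertex of $G'$ is paired with another vertex of $G'$. \qed
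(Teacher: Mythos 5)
Your setup (passing to the sub-representation induced by $G'$ and reducing the lemma to ``no cluster is a singleton with respect to $G'$'') is fine and matches the paper. The problem is the decisive step. Done honestly, your count gives no contradiction at all: with clusters of size at most $2$ one has $2(N-s)+s=8$, so your inequality $26 \leq n_S+4N-6-2s$ collapses to $26 \leq n_S+10 \leq 26$ for \emph{every} admissible $N$ and $s$, i.e.\ only tightness -- and tightness is genuinely attainable, since the legitimate four-pair representation of $K_8^-$ (Fig.~\ref{fi:k8minus_22planar}) attains exactly these equalities. So ``equality over-constrains the rotation/port structure'' is the entire content of the lemma in your approach, and you give no argument for it. Your fallback is a non sequitur: the skeleton is planar by construction, and the non-planarity of $K_8^-$ is already what the edge count encodes; it cannot force $n_S>16$. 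Worse, if a singleton cluster's vertex is allowed its full two ports (which the model permits and which your accounting correctly allows), the $s$-type corrections vanish and the bound acquires slack, so pure counting cannot even exclude configurations with many singletons (all eight vertices in singleton clusters gives $26\leq 34$).

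The paper closes these holes with two ingredients you never invoke. First, the graph of clusters of a $(2,2)$-planar representation is planar (contract each cluster region to a point); since the seven vertices of $G'$ other than $v$ induce $K_7$, putting them into five or more clusters yields a $K_5$ subgraph in the cluster graph, which kills every case except ``three $2$-clusters plus two $G'$-singletons.'' Second, in that remaining case the paper derives the strict contradiction $23 \leq 14+15-6-2=21$ from Equation~\eqref{eq:9}, which rests on a sharper accounting of ports and singleton clusters (in effect charging a singleton cluster one port and the corresponding $-s$ term) than the $2$-ports-per-vertex bound you allow yourself; bridging that difference -- justifying the sharper accounting or replacing it by a structural/embedding argument for the tight case -- is exactly the step your proposal leaves open. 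As it stands, the proof has a genuine gap at its core.
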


\begin{restatable}{theorem}{theoremNPhardness} \label{thm:22_NP-hard}\emph{\textbf{[*]}}
	{\sc $(2,2)$-Planarity} is NP-complete.
\end{restatable}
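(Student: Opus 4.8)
The plan is to prove NP-completeness of {\sc $(2,2)$-Planarity} by a polynomial reduction from {\sc Planar Monotone 3-SAT}, building on the $K$-vertex gadget of Lemma~\ref{le:Kvertex}. Membership in NP is routine: a certificate is a clustering of $V$ together with a combinatorial description of a $(2,2)$-planar representation (a rotation system on ports, the port multiset of each vertex, and the port-to-port assignment of inter-cluster edges), whose planarity and consistency can be checked in polynomial time. The substance is NP-hardness.

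Given a monotone rectilinear representation $\Phi$ of a formula $F$ with variables $x_1,\dots,x_n$ and clauses $C_1,\dots,C_m$, I would construct in polynomial time a graph $G$ that is $(2,2)$-planar if and only if $F$ is satisfiable, laying $G$ out following the geometry of $\Phi$. The key design principle is that in a $(2,2)$-planar representation each cluster has at most two vertices and at most four ports total, so a cluster of two vertices behaves like a small "node" with a bounded cyclic supply of attachment points; Lemma~\ref{le:Kvertex} lets me pin down how the dense $K^-_8$ pieces must cluster, thereby forcing the rest of the structure to be nearly rigid. Concretely I expect the following gadgets. (i) A \emph{variable gadget}: a cyclic chain of clusters (the "variable cycle" of Fig.~\ref{fi:variable_cycle}) whose only combinatorial freedom is a global choice between two states, realized by which of two possible port orders / pairings is used; the two states encode $x_i = \mathrm{true}$ and $x_i = \mathrm{false}$, and consistency is enforced because a single cluster cannot simultaneously expose the "true side" and the "false side" to two different clause connectors. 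The term "false literal boundaries'' in the caption of Fig.~\ref{fi:variable_cycle} suggests that in the false state a particular boundary arc of the gadget is blocked (saturated by edges), so no clause connector can attach there. (ii) A \emph{wire/connector gadget} made of $K$-vertices chained together: by Lemma~\ref{le:Kvertex} each $K$-vertex's $K^-_8$ must cluster internally, forcing the $K$-vertex itself to be clustered with its unique non-$K^-_8$ neighbor along the wire, which makes the wire rigid and gives it a well-defined pair of "ends'' with limited free ports --- exactly the vehicle for transmitting a literal's truth value from the variable cycle up (for positive clauses) or down (for negative clauses) without crossings. (iii) A \emph{clause gadget}: a small subgraph (again stiffened with $K$-vertices and/or a $K^-_8$) attached to its three wires that admits a crossing-free $(2,2)$-planar completion precisely when at least one of the three incoming wires is in the "satisfying'' state, i.e.\ at least one free port is available at the clause; if all three arrive saturated (all literals false), the clause's own required inter-cluster edges have nowhere to go without a crossing or without entering a cluster region, contradicting $(2,2)$-planarity.

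The two directions are then: (\emph{soundness}) from a satisfying assignment, put each variable cycle in the corresponding state, route each wire accordingly, and complete each clause gadget using the free port supplied by a true literal --- this yields an explicit $(2,2)$-planar representation, using the monotone rectilinear layout of $\Phi$ to keep everything planar and the positive clauses above / negative clauses below the variable line. (\emph{Completeness}) given any $(2,2)$-planar representation of $G$, apply Lemma~\ref{le:Kvertex} repeatedly to show all $K^-_8$'s cluster internally, deduce that each variable gadget is in one of its two canonical states and each wire faithfully propagates it, and read off a truth assignment; the fact that every clause gadget is drawn without crossings forces at least one of its wires to be in the satisfying state, so the assignment satisfies $F$. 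Finally I would note that $G$ has size polynomial in $|\Phi|$ (each gadget is of constant size, one per variable, clause, and crossing-free wire segment), completing the reduction.

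The main obstacle I anticipate is the rigidity analysis in the completeness direction: proving that the only $(2,2)$-planar representations of each gadget are the intended "canonical" ones, so that no unintended clustering, port multiplicity, or edge routing can smuggle extra connectivity past a clause and falsely satisfy it. This is where the $K$-vertex lemma does the heavy lifting, but one still has to argue carefully about the limited port budget (at most four ports per two-vertex cluster), about which vertices are forced to share a cluster, and about the cyclic order of ports around each cluster region --- essentially a local case analysis on each gadget's boundary. A secondary technical point is making the global planar routing work simultaneously for all gadgets: one must verify that the variable cycle, the up-going positive-clause wires, and the down-going negative-clause wires can be nested according to the planar monotone structure of $\Phi$ without forcing a cluster region to be "surrounded,'' which would leave some required edge with no crossing-free route. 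Both issues are handled by keeping all gadgets of bounded size and exploiting the planarity already guaranteed by the rectilinear representation.
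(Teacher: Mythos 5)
Your proposal identifies the right target ({\sc Planar Monotone 3-SAT}), the right membership argument, and the right role for Lemma~\ref{le:Kvertex}, and in that sense it matches the paper's overall strategy. However, it stops at the level of a plan: the gadgets that carry the entire weight of the reduction are never actually constructed, and the mechanisms you conjecture for them are not the ones that are known to work. Concretely, you propose that a variable's truth value is encoded by ``which of two possible port orders/pairings'' a cluster uses and that a false literal is signalled by a boundary arc being ``saturated by edges'' so that no connector can attach; you also propose wires made of chained $K$-vertices to propagate values. None of these is backed by a construction or a rigidity proof, and the paper's reduction works quite differently: truth values are encoded by the \emph{side} of the variable cycle on which a path of ordinary vertices (the false literal boundary, of length $\max(p_i,q_i)$) is embedded; literal vertices are joined to variable vertices by direct edges rather than $K$-vertex wires; and the ``blocking'' mechanism is a clustering budget, not port saturation --- a literal edge that crosses a false literal boundary can be repaired only by clustering the literal vertex with a boundary vertex, which then makes it impossible to cluster that same literal with $open_j$ inside the clause gadget.

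The missing substance is exactly the part you flag as your ``main obstacle.'' The paper's clause gadget ($l_{j,1},l_{j,2},l_{j,3},open_j$ pairwise adjacent plus a $K$-vertex $closed_j$) is forced to cluster one literal with $open_j$ via a $K_5$-minor argument on the graph of clusters, and the completeness direction is settled by a case analysis on where $closed_j$ can be placed relative to the three false literal boundaries (Figs.~\ref{fi:case1}--\ref{fi:case4}), showing every placement forces a crossing unless some literal matches its variable's assignment. Without comparable arguments --- a concrete variable gadget with provably exactly two admissible states, a concrete clause gadget with a proof that it is $(2,2)$-planar if and only if a satisfying connection is available, and a proof that your $K$-vertex wires are rigid and cannot be bypassed by unintended clusterings of ordinary vertices --- the ``only if'' direction of your reduction is unsupported, so the proposal as written does not yet constitute a proof of NP-hardness.
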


\begin{proof}
{\sc $(2,2)$-Planarity} is trivially in NP. We prove the NP-hardness of {\sc $(2,2)$-Planarity} by reduction from {\sc Planar Monotone 3-SAT}. Given an instance $\Phi$ of {\sc Planar Monotone 3-SAT}, we construct a graph $G$ that is a {\sc Yes} instance of {\sc $(2,2)$-Planarity} if and only if $\Phi$ is a {\sc Yes} instance of {\sc Planar Monotone 3-SAT}. For convenience, figures show the construction of the graph $G_0$ corresponding to the {\sc Planar Monotone 3-SAT} instance $\Phi_0$ in Fig.~\ref{fi:x_0}. In figures, we represent $K$-vertices and their associated $K^-_{8}$ subgraphs with solid dots, while ordinary vertices are represented with hollow dots.

For each variable $v_i$ of $F$ (with $i=1,\dots,n$) create in $G$ a $K$-vertex $v_i$ and connect such $K$-vertices in a cycle, in the order implied by $\Phi$ (refer to Fig.~\ref{fi:variable_cycle}). Split each edge ($v_i$, $v_{i+1}$) of the cycle with a $K$-vertex $c_{i, i+1}$. Split the edge ($v_1$, $v_n$) with the vertices $c_{0, 1}$ and $c_{n, n+1}$. Finally, duplicate the edge $(c_{0, 1}, c_{n, n+1})$ and split the duplicated edges with the $K$-vertices $plus$ and $minus$. We refer to this subgraph as the \emph{variable cycle}. Given a variable $v_i$, let $p_i$ be the number of positive clauses and $q_i$ be the number of negative clauses of $F$ in which $v_i$ appears. For $1\leq i\leq n$, connect $c_{i-1, i}$ to $c_{i, i+1}$ with a path of ordinary vertices of length equal to $max(p_i, q_i)$. We refer to these paths as \emph{false literal boundaries.}

For each clause $C_j = (l_{j,1} \vee l_{j,2} \vee l_{j,3})$ in $F$, create a corresponding clause gadget in $G$. Create ordinary vertices $l_{j,1}$, $l_{j,2}$, $l_{j,3}$ and $open_j$, create a $K$-vertex $closed_j$, and add an edge between any pair of vertices, as in Fig.~\ref{fi:clause_gadget}. Observe that in any $(2,2)$-planar representation of a clause gadget, two of the four vertices $l_{j,1}, l_{j,2}, l_{j,3}$ and $open_j$ must be arranged in one cluster of size $2$. This is due to the fact that by Lemma~\ref{le:Kvertex}, $closed_j$ must be clustered within its $K^-_{8}$ subgraph. If $l_{j,1}, l_{j,2}, l_{j,3}$ and $open_j$ were all clustered separately, the graph of clusters of $G$ would contain a $K_5$ minor. Also, any $2$-clustering of a clause gadget in which a literal vertex is clustered with $open_j$ is $(2,2)$-planar, as shown in Fig.~\ref{fi:22_clause_gadget}.

\begin{figure}[tb]
	\centering
	\subfigure[] {\includegraphics[width=0.2\textwidth]{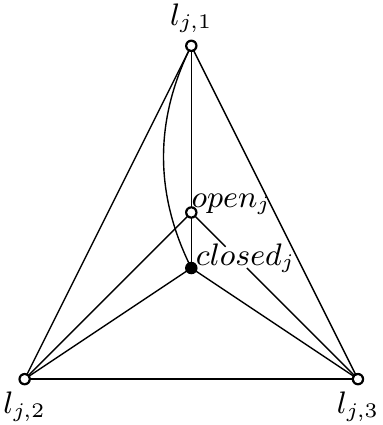} \label{fi:clause_gadget}}
	\hfil
	\subfigure[] {\includegraphics[width=0.2\textwidth]{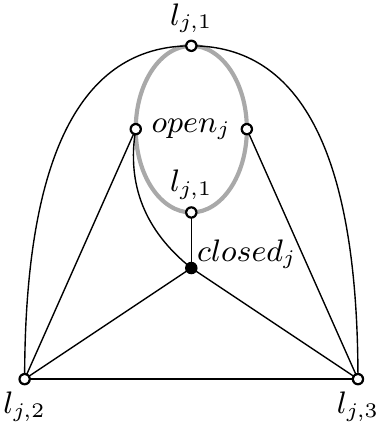} \label{fi:22_clause_gadget}}
	\hfil
	\subfigure[]{\includegraphics[width=0.45\textwidth]{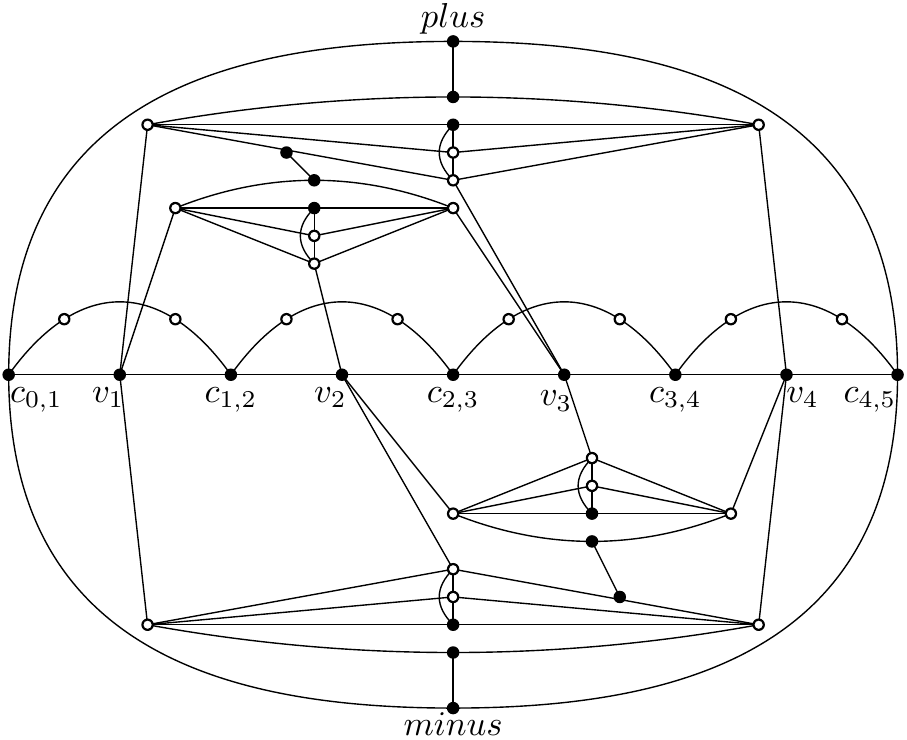} \label{fi:clause_forest}}
	\caption{ (a) A clause gadget $C_j$. (b) A $(2,2)$-planar representation of the clause gadget $C_j$. (c) The graph $G_0$.} \label{fi:gadgets}
\end{figure}

Now, connect the clause gadgets with a tree structure corresponding to the positions of clause rectangles in $\Phi$. Let $C_j$ be a clause rectangle in $\Phi$ with $l_1$, $l_2$, and $l_3$ corresponding to the vertical segments descending from $C_j$ from left to right. If $C_j$ is nested between vertical segments corresponding to literals $m_1$ and $m_2$ of another clause rectangle $C_k$, split the edges $(l_{j,1}, l_{j,3})$ and $(m_{k,1},m_{k,2})$ with $K$-vertices and connect the new $K$-vertices with an edge. If $C_j$ is nested under no other clause rectangle, split $(l_{j,1}, l_{j,3})$ with a $K$-vertex and connect the new vertex to $plus$ if $C_j$ corresponds to a positive clause and to $minus$ otherwise. This procedure leads to a configuration consisting of two trees of clause gadgets connected as in Fig.~\ref{fi:clause_forest}. This concludes the construction of $G$. Appendix \ref{ap:testing} proves that $G$ is $(2,2)$-planar if and only if $\Phi$ has a satisfying assignment. \qed
\end{proof}

\begin{corollary}\label{co:hardness-p-2}
The problem of computing the minimum value of $k$ such that a graph is $(k,2)$-planar is NP-hard.
\end{corollary}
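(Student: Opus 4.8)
The plan is to obtain this as an immediate consequence of Theorem~\ref{thm:22_NP-hard}, in the same spirit as Corollary~\ref{co:hardness-p-1}; the only thing that needs checking is that the underlying optimization problem is well posed. The first step is to observe that $(k,2)$-planarity is monotone in $k$: if $G$ is $(k,2)$-planar, then any clustering together with a $(k,2)$-planar representation witnessing this is also a $(k+1,2)$-planar representation, since a cluster of size at most $k$ has size at most $k+1$ and the representation can be reused unchanged. The second step is to note that every $n$-vertex graph $G$ is $(n,2)$-planar, because placing all of $V$ into a single cluster yields no inter-cluster edges and hence a (vacuously) valid representation. Together these give that $\{k : G \text{ is } (k,2)\text{-planar}\}$ is a nonempty up-closed subset of $\{1,\dots,n\}$, so $k^{\ast}(G) := \min\{k : G \text{ is } (k,2)\text{-planar}\}$ is well defined and at most $n$.

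The third step is the reduction itself: by monotonicity, $G$ is $(2,2)$-planar if and only if $k^{\ast}(G) \le 2$. Hence any algorithm that computes $k^{\ast}(G)$ can be used as a subroutine to decide {\sc $(2,2)$-Planarity} with only constant additional work (compare the returned value with $2$). Since {\sc $(2,2)$-Planarity} is NP-complete by Theorem~\ref{thm:22_NP-hard}, computing $k^{\ast}(G)$ is NP-hard.

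I do not expect any real obstacle here. The one point that warrants care is the monotonicity argument --- one must confirm that enlarging the permitted cluster size never invalidates an existing representation --- and, for completeness, one should note that the instance of the optimization problem has size polynomial in $n$, so that the comparison ``$k^{\ast}(G) \le 2$'' is a genuinely polynomial-time step.
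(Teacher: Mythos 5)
Your proposal is correct and matches the paper's (implicit) argument: the corollary is stated as an immediate consequence of Theorem~\ref{thm:22_NP-hard}, exactly via the observation that an algorithm computing the minimum $k$ decides $(2,2)$-planarity by comparing the result with $2$. Your added details --- monotonicity of $(k,2)$-planarity in $k$ and the trivial upper bound $k^{\ast}(G)\le n$ via a single cluster --- are accurate and simply make explicit what the paper leaves unstated.
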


\section{$(2,2)$-Planarity and $1$-Planarity} \label{se:1-planar}

%Investigating the relationship between $(k,p)$-planar graphs and well-known families of beyond planar graphs is an interesting research direction. Clearly, one can ask infinitely many questions by choosing different values of $k$ and $p$, and by choosing different families of beyond planar graphs.
\setlength\intextsep{2pt}
\setlength{\columnsep}{10pt}%
\begin{wrapfigure}{r}{0.35\textwidth}
	%	\hfill
	\small
	\centering
	\includegraphics[width=0.35\textwidth]{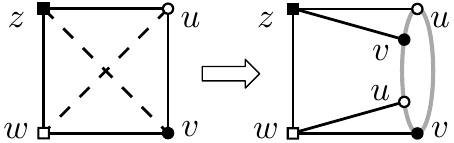}	
	\caption{Removal of a crossing in a $(2,2)$ representation.}
	\label{fi:1-planar-replacement}
\end{wrapfigure}
The NP-completeness of {\sc $(2,2)$-Planarity} suggests further investigation into the combinatorial properties of $(2,2)$-planar graphs. In this section, we study the relationship between $(2,2)$-planarity and $1$-planarity. This is partly motivated by general interest in $1$-planar graphs (see, e.g.,~\cite{DBLP:journals/csr/KobourovLM17}) and partly by the following observation. Since a $1$-planar graph admits a drawing where each edge is crossed by at most one other edge, it seems reasonable to remove each crossing of the drawing by clustering two of the vertices that are involved in the crossing as shown in Fig.~\ref{fi:1-planar-replacement}. An $n$-vertex $1$-planar graph has at most $4n-8$ edges~\cite{DBLP:journals/combinatorica/PachT97}. By Theorem~\ref{thm:edgebound2}, a $(2,2)$-planar graph with $n$ vertices has at most $4n-6$ edges, so it is not immediately clear that there are $1$-planar graphs that are not $(2,2)$-planar.

As we are going to show, however, there is an infinite family of $1$-planar graphs that are not $(2,p)$-planar for any value of $p \geq 1$. On the positive side, we demonstrate a large family of $1$-planar graphs that are $(2,2)$-planar.

\begin{theorem} \label{NIC-no(2,p)}
	For every $h>2$, there exists a $1$-planar graph with $n=5\cdot 2^h -8$ vertices and $m=18\cdot 2^h -36$ edges that is not $(2,p)$-planar, for any $p \geq 1$.
\end{theorem}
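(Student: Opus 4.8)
The plan is to exhibit an explicit family of $1$-planar graphs and then to show, by a counting argument, that they cannot be $(2,p)$-planar for any $p \geq 1$. The natural source of examples is a graph built by recursively ``nesting'' an optimal $1$-planar gadget: take an optimal $1$-planar graph on a few vertices (optimal $1$-planar graphs exist on $5$ vertices, namely $K_5$, and more generally the edge count $4n-8$ is attained infinitely often), and inside selected faces of its planar skeleton glue a scaled copy of the same gadget. Iterating $h$ times and tracking vertices and edges should give $n = 5\cdot 2^h - 8$ and $m = 18\cdot 2^h - 36$; the recurrence $n_{h} = 2n_{h-1} + c$, $m_h = 2m_{h-1} + c'$ for appropriate constants determined by how many vertices and edges a single gluing step contributes (and how many are shared with the host face). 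The first part of the proof is just verifying that this construction is $1$-planar (each gadget is drawn $1$-planarly inside a face, so crossings stay local) and that the vertex/edge counts match the claimed formulas.

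The heart of the argument is the lower-bound direction: showing these graphs are \emph{not} $(2,p)$-planar. Here I would use the edge-density bound of Theorem~\ref{thm:edgebound2}, but note that for $p \geq k = 2$ that theorem gives $m \leq n(p + \tfrac{3}{2} + \tfrac12) - 6 = n(p+2) - 6$, which grows with $p$ and is therefore useless on its own for large $p$. So the real work is to prove a \emph{$p$-independent} upper bound on the number of edges of a $(2,p)$-planar graph that has this particular recursive/nested structure — or, more robustly, to argue directly that the recursive structure forces a ``bottleneck'' that no clustering into pairs can absorb. Concretely: in a $(2,p)$-planar representation the graph of clusters is planar (contract each cluster region), and each cluster of size $2$ can be thought of as a single super-vertex. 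The nested construction is designed so that each level of nesting sits inside a face bounded by a bounded number of vertices; since clusters have size at most $2$, contracting cannot reduce the ``separator size'' between a nested copy and the rest of the graph by more than a factor of $2$, so the nested copy must still be drawn inside a small region — and by induction that region is too crowded to accommodate $18\cdot 2^h-36$ edges planarly, regardless of $p$. Making this precise is the main obstacle: I would formalize it as a claim that any $(2,p)$-planar subgraph ``trapped'' inside a region with a boundary of $O(1)$ vertices has at most $O(1)$ edges beyond what a planar graph allows, and then contradict the exponential edge growth of the construction.

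The key steps, in order, are: (1) define the recursive gadget $H_h$ precisely, specifying the base case $H_2$ (or $H_3$) and the gluing operation; (2) prove by induction that $H_h$ is $1$-planar and has exactly $n = 5\cdot 2^h - 8$ vertices and $m = 18\cdot 2^h - 36$ edges; (3) compute the ``$1$-planar density ratio'' $m/n \to 18/5 = 3.6$, which is strictly below $4$ but — crucially — above whatever $p$-independent threshold we can establish for nested $(2,p)$-planar graphs, or else argue structurally rather than by a global density bound; (4) assume for contradiction a $(2,p)$-planar representation of $H_h$, contract clusters to get a planar graph of clusters, and use the recursive structure to locate a deeply nested copy of the gadget confined by a small vertex boundary; (5) apply the confinement/separator argument inductively to bound the edges inside that region and reach a contradiction with the exponential edge count. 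I expect step (4)–(5) to be where essentially all the difficulty lies, since it requires showing that pairing vertices into clusters of size $2$ cannot ``untangle'' the nesting — the factor-$2$ savings from $2$-clustering is simply not enough to defeat a construction whose nesting depth is $h$ and whose crowding grows geometrically.
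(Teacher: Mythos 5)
There is a genuine gap: the heart of the theorem is the $p$-independent lower-bound argument, and your plan leaves exactly that part unproved (you say yourself that making the ``confinement'' claim precise is the main obstacle). Moreover, the route you sketch is unlikely to work as stated. A $(2,p)$-planar representation is under no obligation to respect the nested structure of the $1$-planar drawing: clusters may pair vertices from different levels of the nesting, and the resulting graph of clusters may be embedded in the plane in a completely different way, so there is no a priori reason why a deeply nested copy of the gadget is ``trapped'' inside a region with a small vertex boundary. To force such confinement you would need an embedding-rigidity argument (e.g.\ high connectivity of the contracted graph), which is not available after an arbitrary $2$-clustering; also, in the natural nested constructions (including the paper's) the boundaries of the levels grow geometrically rather than staying $O(1)$, so the induction you envision does not even have the right quantitative shape. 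A small side issue: $K_5$ is not an optimal $1$-planar graph ($10 < 4\cdot 5-8+ \,$ slack), and your vertex/edge recurrences are left unverified, but these are repairable details compared with the missing core argument.

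The paper's proof avoids topology entirely and replaces your steps (4)--(5) with a short counting argument that you should be able to adapt. Its construction $H_h$ (two nested towers of kites, i.e.\ crossed $K_4$'s, glued along their outermost kites) is engineered so that \emph{any two vertices are joined by at most $4$ paths of length at most $2$}. Now suppose $H_h$ were $(2,p)$-planar and contract each $2$-cluster: if $q$ pairs are contracted, the graph of clusters has $n-q$ vertices and, by the path property, at least $m-4q$ edges, and it must be planar (this is where $p$ disappears --- no port count can create a crossing-free drawing whose cluster graph is nonplanar). Then $m-4q\le 3(n-q)-6$ forces $q\ge m-3n+6=3\cdot 2^h-6$, while trivially $q\le n/2=(5\cdot 2^h-8)/2$; these are incompatible for $h>2$. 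So the idea you are missing is a bound on the \emph{edge savings per contracted pair} (here $4$, coming from bounding common neighbourhoods in the construction), used against the planarity of the cluster graph, rather than any separator or confinement argument.
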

\begin{proof}
We define a recursive family of $1$-plane graphs as follows. Graph $\overline{H}_1$ consists of a single \emph{kite} $K$, which is a $1$-plane graph isomorphic to $K_4$ drawn so that all the vertices are on the boundary of the outer face. Graph $\overline{H}_i$, for $i=2,3,\dots$, has $2^i$ kites in addition to $\overline{H}_{i-1}$; these kites form a cycle in the outer face of $\overline{H}_{i-1}$, and each kite contains a vertex of the boundary of the outer face of $\overline{H}_{i-1}$ (note that $\overline{H}_{i-1}$ has $2^i$ vertices on the boundary of the outer face). See Fig.~\ref{fi:NIC-counterexample_construction} for an example. The kites of $\overline{H}_i \setminus \overline{H}_{i-1}$ are called the \emph{external kites of $\overline{H}_i$}. The embedding of $\overline{H}_i$ described in the definition will be called the \emph{canonical embedding} of $\overline{H}_i$. We also consider another possible embedding, called the \emph{reversed embedding}. Let $B$ be the boundary of the outer face in the canonical embedding of $\overline{H}_i$; in the reversed embedding of $\overline{H}_i$ the cycle $B$ is the boundary of an inner face and all the rest of the graph is embedded outside $B$. See Fig.~\ref{fi:NIC-G3-canonical} and Fig.~\ref{fi:NIC-G3-reversed} for an example. For any $h > 2$, let $\overline{H^c_h}$ be a copy of $\overline{H}_h$ with a canonical embedding, and let $\overline{H^r_h}$ be a copy of $\overline{H}_h$ with a reversed embedding. The graph obtained by identifying the external kites of $\overline{H^c_h}$ with the external kites of $\overline{H^r_h}$ is denoted as $H_h$. Fig.~\ref{fi:NIC-counterexample_labels} shows the graph $H_3$. By construction, $\overline{H}_i$ has $n_i=2^{i+1}-4$ vertices and $m_i=12\cdot 2^i-18$ edges. Hence, $H_h$ has $n=5 \cdot 2^h-8$ vertices and $m=18 \cdot 2^{h}-36$ edges.

We show that $H_h$ is not $(2,p)$-planar for any $p \geq 1$. Suppose that $H_h$ has a $(2,p)$-planar representation $\Gamma$ for some $p \geq 1$ and let $G_C$ be the graph of clusters of $H_h$. Since $\Gamma$ is planar, $G_C$ must be planar. $G_C$ can be obtained from $H_h$ by contracting each pair of vertices that is assigned to each cluster region (and removing multiple edges). Contracting a pair of vertices $u$ and $v$, the number of vertices reduces by one and the number of edges reduces by the number of paths of length at most $2$ connecting $u$ and $v$ (for each path we remove one edge).
In $H_h$, there are at most $4$ such paths between any pair of vertices. Hence, if we contract $q$ pairs of vertices, the number of vertices in $G_C$ is $n'=n-q$, while the number of edges is $m' \geq m-4q$. If $G_C$ is planar, $m' \leq 3n'-6$ and thus it must be $m-4q \leq 3(n-q)-6$, which gives $q \geq m - 3n + 6 = 3 \cdot 2^h - 6$, i.e. we must contract at least $3 \cdot 2^h - 6$ pairs of vertices. Since there are $5 \cdot 2^h - 8$ vertices, we can contract at most $\frac{5 \cdot 2^h - 8}{2}$ pairs. Thus, it must be $3 \cdot 2^{h}-6 \leq 5 \cdot 2^{h-1}-4$, i.e., $2^{h-1} \leq 2$, which can be satisfied only for $h \leq 2$.

Note that our argument is independent of the $1$-planar embedding of $H_h$. This implies that the result holds for $1$-planar graphs, not just for $1$-plane graphs.
\qed
\begin{figure}[tb]
	\centering
	\subfigure[] {\includegraphics[width=0.18\textwidth]{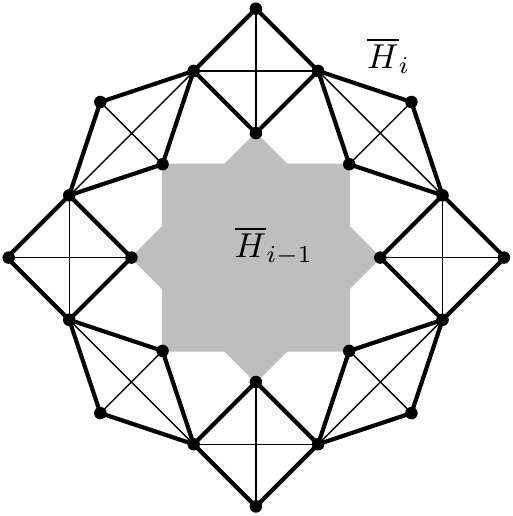} \label{fi:NIC-counterexample_construction}}
	\hfil
	\subfigure[] {\includegraphics[width=0.18\textwidth]{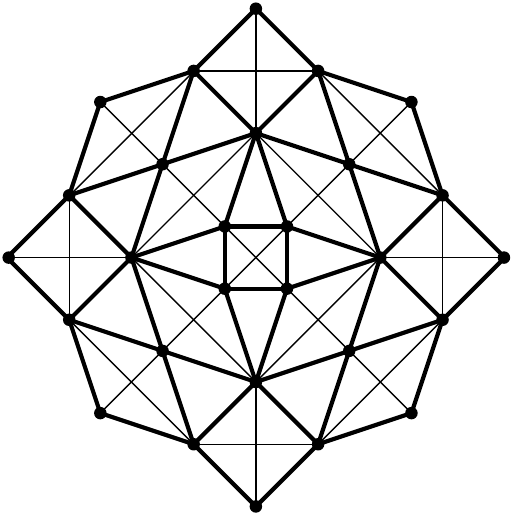} \label{fi:NIC-G3-canonical}}
	\hfil
	\subfigure[] {\includegraphics[width=0.2\textwidth]{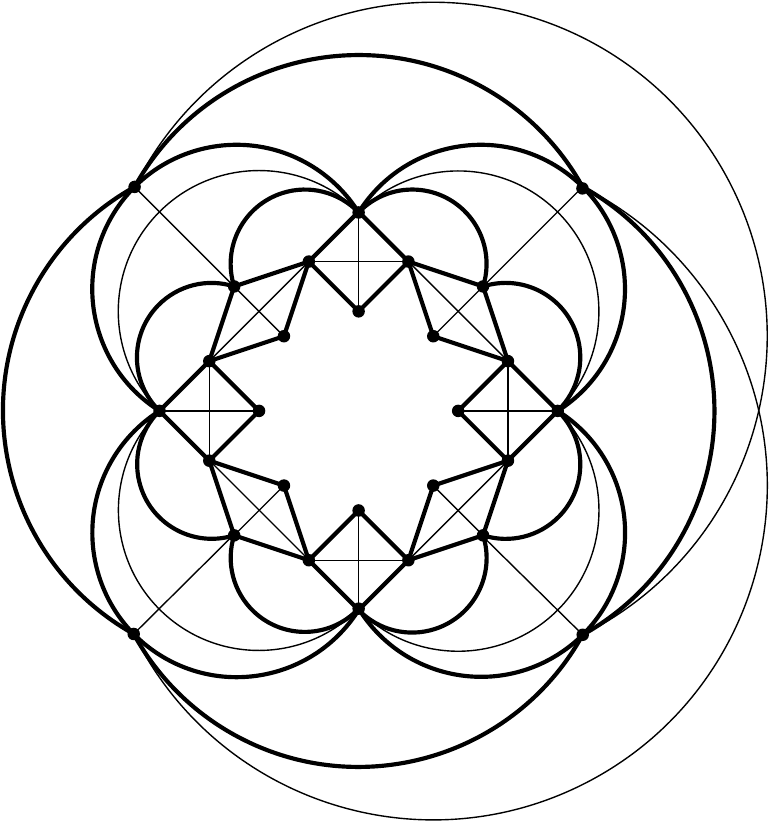} \label{fi:NIC-G3-reversed}}
	\hfil
	\subfigure[] {\includegraphics[width=0.2\textwidth]{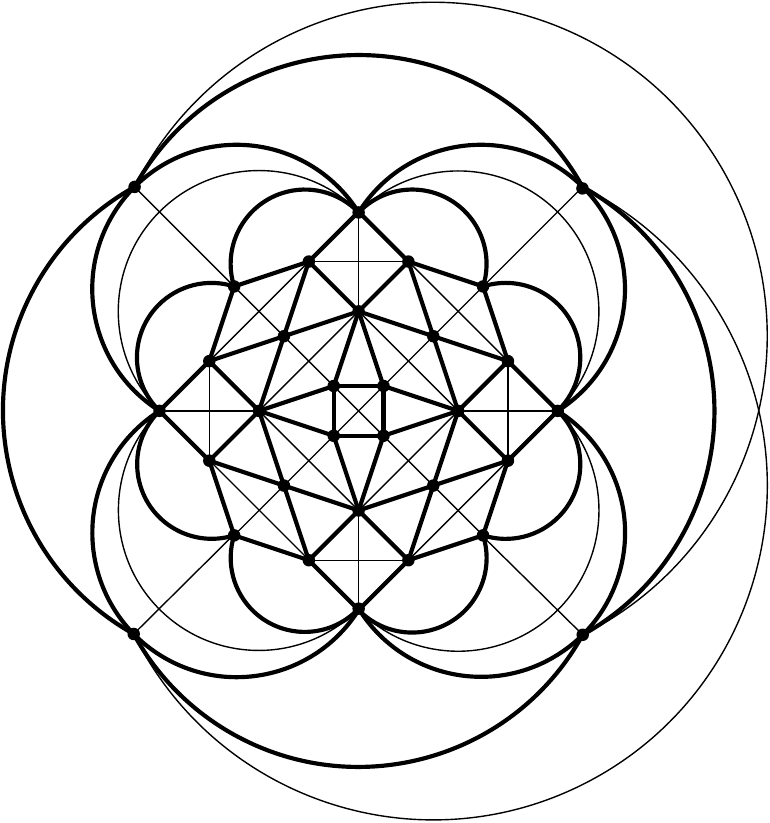} \label{fi:NIC-counterexample_labels}}
	\caption{(a) Definition of $\overline{H}_i$. (b)--(c) Canonical and reversed embedding of $\overline{H}_3$. (d) $H_3$.} \label{fi:NIC-counterexample}
\end{figure}
\end{proof}

\setlength\intextsep{-16pt}
\setlength{\columnsep}{10pt}%
\begin{wrapfigure}{r}{0.3\textwidth}
	\small
	\centering
	\includegraphics[width=0.23\textwidth]{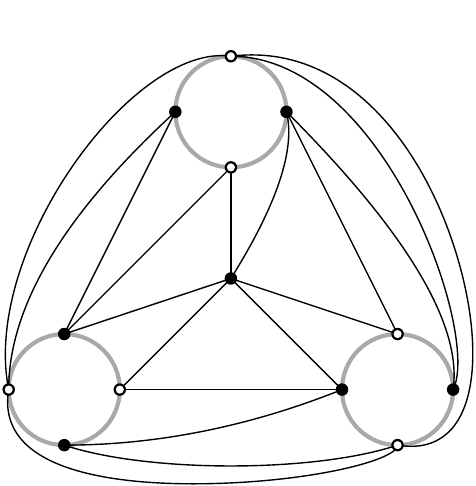}	
	\caption{A $(2,2)$-planar representation of $K_7$.}
	\label{fi:K_7-(2,2)}
\end{wrapfigure}

Theorem~\ref{NIC-no(2,p)} motivates further investigation of the relationship between $1$-planar and $(2,2)$-planar graphs. Note that there are infinitely many $(2,2)$-planar graphs that are not $1$-planar. For example, observe that every graph obtained by connecting with an edge a planar graph and $K_7$ has such a property, because $K_7$ is not $1$-planar (it has more than $4n -8 = 20$ edges) but it is $(2,2)$-planar, as depicted in Fig.~\ref{fi:K_7-(2,2)}.

In what follows, we describe a non-trivial family of $1$-planar graphs that are also $(2,2)$-planar.

Let $G$ be a $1$-plane graph, and let $e_u=(u_1,u_2)$ and $e_v=(v_1,v_2)$ be a pair of crossing edges of $G$.
Any pair $\langle u_i,v_j \rangle$, with $1 \leq i,j \leq 2$, is a \emph{representative pair} of the edge crossing defined by $e_u,e_v$. An \emph{independent set of distinct representatives} (\emph{ISDR} for short) of $G$ is a set of representative pairs such that there is exactly one representative pair per crossing and no two representative pairs in the set have a common vertex. Fig.~\ref{fi:isdr-b} shows an ISDR for the graph of Fig.~\ref{fi:isdr-a}.

 We want to show that if a $1$-plane graph $G$ has an ISDR then it is $(2,2)$-planar. The \emph{crossing edges graph} of $G$, called \emph{$ce$-graph} for short and denoted as $CE(G)$, is the subgraph of $G$ induced by the crossing pairs of $G$. $G$ is \emph{pseudoforestal} if $CE(G)$ is a pseudoforest (i.e. it has at most one cycle in each connected component). For example, the $1$-planar graph of Fig.~\ref{fi:isdr-a} is pseudoforestal, as shown in Fig.~\ref{fi:isdr-c}. The pseudoforestal $1$-planar graphs include non-trivial subfamilies of $1$-planar graphs, such as IC-planar graphs (whose $ce$-graph has maximum degree one), or the $1$-planar graphs such that each vertex is shared by at most two crossing pairs (whose $ce$-graph has maximum degree two).

\begin{figure}[tb]
	\centering
	\subfigure[] {\includegraphics[width=0.26\textwidth, page=1]{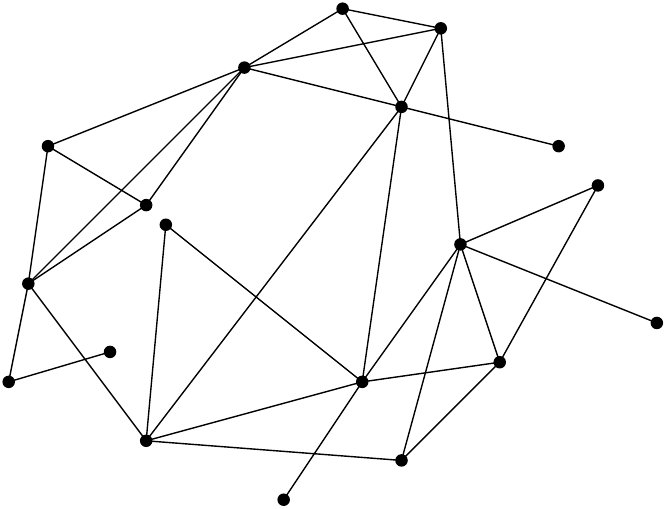} \label{fi:isdr-a}}
	\hfil
	\subfigure[] {\includegraphics[width=0.26\textwidth, page=2]{isdr} \label{fi:isdr-b}}
	\hfil
	\subfigure[] {\includegraphics[width=0.26\textwidth, page=3]{isdr} \label{fi:isdr-c}}
	\caption{(a) A $1$-planar graph $G$. (b) An ISDR of $G$. For each pair of crossing edges the representative pair is indicated with a dashed line connecting the pair. Vertices shared by different crossing pairs are replicated in each pair. (c) The $ce$-graph $CE(G)$ of $G$. } \label{fi:isdr}
\end{figure}

\begin{theorem}
  A pseudoforestal $1$-plane graph is $(2,2)$-planar.
\end{theorem}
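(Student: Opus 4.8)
The plan is to show that if $CE(G)$ is a pseudoforest, then $G$ admits an ISDR, and then to show that the existence of an ISDR suffices to build a $(2,2)$-planar representation. For the first part, recall that $CE(G)$ is the subgraph induced by the endpoints of crossing edge pairs. A pseudoforest has the property that every connected component has at most one cycle, so each component is either a tree or a tree plus one extra edge; in particular every component has an orientation in which every vertex has out-degree at most one (orient each tree toward a root; for a unicyclic component, orient the unique cycle consistently and the pendant trees toward the cycle). For each crossing, its four endpoints induce a (sub)graph of $CE(G)$ containing at least the two crossing edges $e_u=(u_1,u_2)$ and $e_v=(v_1,v_2)$; using the out-degree-$\le 1$ orientation, I would select, for the crossing formed by $e_u$ and $e_v$, the representative pair $\langle u_i, v_j\rangle$ where $u_i$ is the tail of the oriented copy of $e_u$ and $v_j$ is the tail of the oriented copy of $e_v$. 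The key point is that a vertex $w$ can be the chosen tail for at most one crossing, because $w$ has out-degree at most one in the orientation, so it can be the tail of at most one oriented edge of $CE(G)$, hence can be picked into at most one representative pair. This gives disjoint representative pairs, i.e. an ISDR. (A small subtlety: the two crossing edges $e_u,e_v$ might already share a vertex, but in a $1$-plane drawing crossing edges are independent, so $u_1,u_2,v_1,v_2$ are four distinct vertices and the argument goes through.)

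For the second part, given an ISDR $\{\langle x_c, y_c\rangle : c \text{ a crossing}\}$, I would form the clustering of $V(G)$ whose nontrivial clusters are exactly the pairs $\{x_c, y_c\}$, with all other vertices as singleton clusters; this is well-defined because the ISDR pairs are vertex-disjoint. I then build the representation by modifying the $1$-plane drawing of $G$ locally at each crossing, as suggested by Fig.~\ref{fi:1-planar-replacement}: at the crossing $c$ between $e_u$ and $e_v$, contract a small disk around the crossing point into the cluster region for $\{x_c,y_c\}$, rerouting the (at most four) edges incident to $x_c$ and $y_c$ that entered this disk so that they terminate at ports on the region's boundary. Since $x_c$ gets two ports and $y_c$ gets two ports, and the four edge-ends arriving at the disk are the two half-edges of $e_u$ plus the two half-edges of $e_v$ — but $x_c$ owns one half-edge of $e_u$ and $y_c$ owns one half-edge of $e_v$, so in fact only these can be rerouted to $x_c$, $y_c$ respectively — I need to check that $p=2$ ports per vertex is enough. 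The cleanest way is to observe that after drawing the cluster region for $\{x_c,y_c\}$, the two crossing edges $e_u,e_v$ become four inter-cluster (half-)edges meeting the region, and the drawing around the region is planar by construction; all non-crossing edges of $G$ are left untouched and remain crossing-free.

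The main obstacle I expect is the careful bookkeeping in the second part: verifying that two ports per vertex in a cluster genuinely suffice to absorb all rerouted edge-ends at every crossing, even when a single vertex $x_c$ participates in several crossings (it can be an endpoint of many crossing pairs even though it is the chosen representative for only one), and checking that the local disk modifications at different crossings can be performed simultaneously without the modified pieces interfering. Concretely, a vertex $v$ that is an endpoint of $t$ crossing edges has $t$ crossings near it, but $v$ lies in a cluster region only for the unique crossing (if any) for which $v$ is a chosen representative; for every other crossing near $v$, the crossing's disk is absorbed into a different cluster region, and the edge of $v$ running through that disk must be rerouted to a port of the other vertex of that pair, not of $v$. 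Making sure this global rerouting stays planar — essentially, that shrinking each crossing's disk and reconnecting stubs does not create new crossings — is the delicate step; I would handle it by shrinking the disks to be pairwise disjoint and small enough to lie in the interiors of faces of the ``planarized'' drawing of $G$, so that all rerouting happens inside disjoint disks and the rest of the drawing is the fixed planar skeleton. I would then invoke that the resulting picture satisfies conditions (i)–(v) of a $(2,2)$-planar representation, completing the proof. \qed
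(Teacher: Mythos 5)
Your first half is fine: constructing an out-degree-at-most-one orientation of the pseudoforest $CE(G)$ (trees toward a root, unicyclic components around the cycle) and taking the two tails of each crossing pair as its representative pair does yield an ISDR, and this is essentially the paper's argument with the orientation reversed (the paper orients so that in-degree is at most one and picks heads, citing the pseudoarboricity characterization rather than building the orientation by hand). The gap is in the second half, in how you turn the ISDR into a $(2,2)$-planar representation. You take the cluster region for the pair $\{x_c,y_c\}$ to be a small disk around the crossing point and explicitly leave all non-crossing edges of $G$ untouched. But $x_c$ and $y_c$ sit at their original positions in the $1$-plane drawing, away from that disk, and every other edge incident to $x_c$ or $y_c$ terminates there. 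In a $(2,2)$-planar representation \emph{all} inter-cluster edges incident to $x_c$ must end at one of its (at most two) ports on the boundary of its cluster region; with your choice of region those edges end nowhere on that boundary, so conditions (iii)--(iv) are violated. The difficulty you flag at the end is therefore not ``bookkeeping'' about disjoint disks or port counts: the construction as stated simply does not produce a valid representation, and fixing it requires either rerouting the entire edge fan of $x_c$ and $y_c$ into the disk (e.g.\ sliding each representative vertex along its crossing-free half-edge into the disk, which you explicitly exclude) or changing the shape of the region.

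The paper resolves exactly this point by choosing a different cluster region: a thin region that runs along the two half-edges from $u_1$ and $v_1$ up to just before the crossing point, bounded by those edge portions and two short connecting curves. Then the original position of $u_1$ (resp.\ $v_1$) lies on the region's boundary and serves as its first port, so all other edges incident to $u_1$ and $v_1$ remain untouched, while subdivision points $u'_1$ and $v'_1$ placed near the crossing serve as second ports; the far halves of $e_u$ and $e_v$ (toward $u_2$ and $v_2$) are rerouted to $u'_1$ and $v'_1$ by walking alongside the old edges, removing the crossing without introducing new ones. Disjointness of the ISDR pairs then guarantees that each vertex needs at most two ports and that the local modifications at distinct crossings do not conflict. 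So the missing idea in your proposal is precisely the shape of the cluster region (or, equivalently, the relocation of the representative vertices), not the verification that small disks can be kept disjoint.
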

\begin{proof}
  We start by proving that a $1$-plane graph $G$ contains an ISDR if and only if $G$ is pseudoforestal. It is known that a graph $G$ can be oriented such that the maximum in-degree is $k$ if and only if its pseudoarboricity is $k$ (i.e. the edges of $G$ can be partitioned into $k$ pseudoforests)~\cite{DBLP:conf/isaac/Kowalik06}. Thus, $G$ is pseudoforestal if and only if $CE(G)$ can be oriented so that the maximum in-degree is one. We now show that this is a necessary and sufficient condition for the existence of an ISDR $S$ in $G$. Assume that an ISDR exists. Let $e_u=(u_1,u_2)$ and $e_v=(v_1,v_2)$ be two crossing edges and let $\langle u_i,v_j \rangle$ ($1 \leq i,j \leq 2$) be the representative pair of $e_u$ and $e_v$. Direct $e_u$ towards $u_i$ and $e_v$ towards $v_j$. Doing this for each pair of crossing edges defines an orientation for all edges of $CE(G)$. In this orientation each vertex of $CE(G)$ has in-degree at most $1$, since no two pairs in $S$ share a vertex. Now suppose that $CE(G)$ has an orientation such that each vertex has in-degree at most $1$. For each pair of directed crossing edges $(u_1,u_2),(v_1,v_2)$ in $CE(G)$, we add the pair $\langle u_2,v_2 \rangle$ to $S$. Since each vertex $v$ in $CE(G)$ has in-degree at most $1$, $v$ is a vertex of at most one pair in $S$. Thus, the pairs selected for different crossing pairs are distinct and no two of them share a vertex.

 We now describe how to use an ISDR $S$ of $G$ to construct a $(2,2)$-planar representation of $G$ where each pair in $S$ is represented as a $2$-cluster that has $2$ copies for each of its vertices. Let $\Gamma$ be a $1$-planar drawing of $G$ that respects the $1$-planar embedding of $G$. Consider any two crossing edges $e_u=(u_1,u_2)$ and $e_v=(v_1,v_2)$ and denote by $c$ the point where they cross in $\Gamma$. Without loss of generality, assume that $\langle u_1,v_1 \rangle$ is the representative pair of $e_u$ and $e_v$ (see Fig.~\ref{fi:NIC-SDR-(2,2)} for an illustration). Subdivide the edge $e_u$ with a copy $v'_1$ of $v_1$ placed between $u_1$ and $c$ along $e_u$; analogously, subdivide the edge $e_v$ with a copy $u'_1$ of $u_1$. Add a curve $\lambda_1$ connecting $u'_1$ to $v'_1$ and a curve $\lambda_2$ connecting $u_1$ to $v_1$. By walking very close to the two edges $e_u$ and $e_v$, these two curves can be drawn without crossing any existing edge and so that the closed curve $\lambda$ formed by $\lambda_1$ and $\lambda_2$ together with the portion of $e_u$ from $u_1$ to $v'_1$ and the portion of $e_v$ from $v_1$ to $u'_1$ does not contain any vertex of $\Gamma$. Curve $\lambda$ defines the cluster region for the cluster containing $u$ and $v$. Replace the edge $e_u$ with a curve $\lambda_u$ connecting $u_2$ to $u'_1$ and the edge $e_v$ with a curve $\lambda_v$ connecting $v_2$ to $v'_1$. Again, by walking very close to the two edges $e_u$ and $e_v$, $\lambda_u$ and $\lambda_v$ can be drawn without crossing existing edges and without crossing each other. The replacements of $e_u$ with $\lambda_u$ and of $e_v$ with $\lambda_v$ remove the crossing between $e_u$ and $e_v$. Repeating the described procedure for every pair of crossing edges, all crossings are removed. Since for each pair of crossing edges there is a distinct representative pair and no two pair share a vertex, the result is a $(2,2)$-planar representation of $G$.\qed
\end{proof}

\begin{figure}[tb]
	\centering
	\subfigure[]{\includegraphics[width=.27\textwidth, page=5]{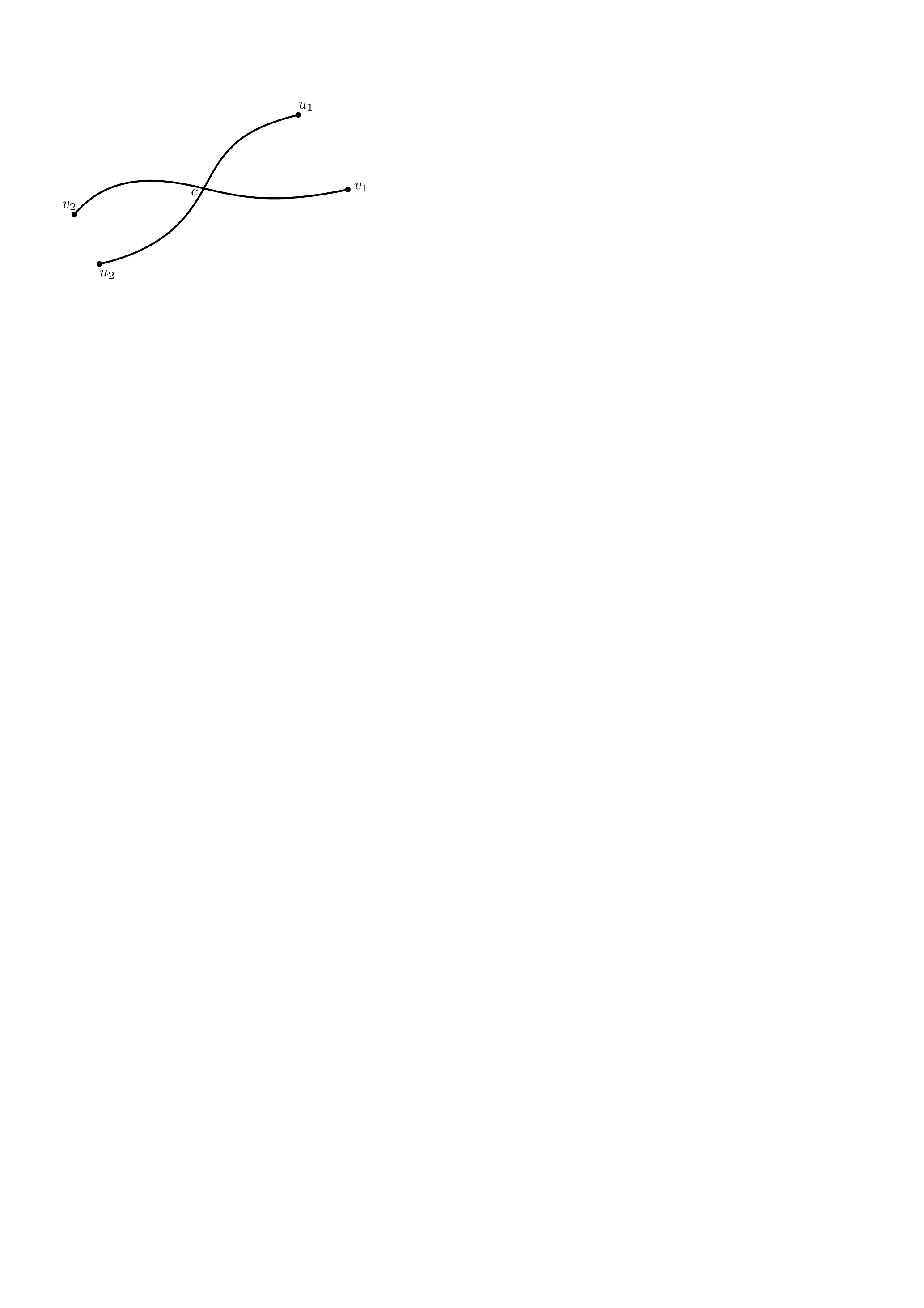} \label{fi:22-construction-a}}
	\hfil
	\subfigure[]{\includegraphics[width=.27\textwidth, page=6]{22-construction} \label{fi:22-construction-b}}
	\hfil
	\subfigure[]{\includegraphics[width=.27\textwidth, page=7]{22-construction} \label{fi:22-construction-c}}
	\caption{(a) Two crossing edges $e_u$ and $e_v$; (b) Construction of the cluster region and replacement of $e_u$ and $e_v$; (c) The resulting drawing.}\label{fi:NIC-SDR-(2,2)}
\end{figure}

\section{Open Problems} \label{se:open-problems}

The results in this paper suggest the following open problems: (i) Tightly bound the edge density of $(k,p)$-planar graphs for $p \geq k$; (ii) Study the complexity of $(k,p)$-planarity testing for larger values of $k$ and $p$; (iii) Further study the relationship between $1$-planar graphs and $(2,p)$-planar graphs.

\subsection*{Acknowledgements}
The authors wish to thank Maurizio Patrignani for useful discussions.

%\bibstyle{splncs04}

\bibliographystyle{splncs04}
\bibliography{biblio}

\begin{thebibliography}{10}
\providecommand{\url}[1]{\texttt{#1}}
\providecommand{\urlprefix}{URL }
\providecommand{\doi}[1]{https://doi.org/#1}

\bibitem{DBLP:conf/gd/AngeliniLBFPR15}
Angelini, P., {Da Lozzo}, G., {Di Battista}, G., Frati, F., Patrignani, M.,
  Rutter, I.: Intersection-link representations of graphs. In: {Di Giacomo},
  E., Lubiw, A. (eds.) Graph Drawing and Network Visualization, {GD} 2015.
  LNCS, vol.~9411, pp. 217--230. Springer (2015).
  \doi{10.1007/978-3-319-27261-0\_19}

\bibitem{Batagelj_Visual_2011}
Batagelj, V., Brandenburg, F., Didimo, W., Liotta, G., Palladino, P.,
  Patrignani, M.: Visual analysis of large graphs using {(X,Y)-Clustering} and
  hybrid visualizations. IEEE Transactions on Visualization and Computer
  Graphics  \textbf{17}(11),  1587--1598 (2011). \doi{10.1109/TVCG.2010.265}

\bibitem{DBLP:journals/ijcga/BergK12}
de~Berg, M., Khosravi, A.: Optimal binary space partitions for segments in the
  plane. Int. J. Comput. Geometry Appl.  \textbf{22}(3),  187--206 (2012),
  \url{http://www.worldscinet.com/doi/abs/10.1142/S0218195912500045}

\bibitem{DBLP:journals/tcs/BrandenburgDEKL16}
Brandenburg, F.J., Didimo, W., Evans, W.S., Kindermann, P., Liotta, G.,
  Montecchiani, F.: Recognizing and drawing {IC-planar} graphs. Theor. Comput.
  Sci.  \textbf{636},  1--16 (2016). \doi{10.1016/j.tcs.2016.04.026},
  \url{https://doi.org/10.1016/j.tcs.2016.04.026}

\bibitem{DBLP:conf/gd/LozzoBFP16}
{Da Lozzo}, G., {Di Battista}, G., Frati, F., Patrignani, M.: Computing
  {NodeTrix} representations of clustered graphs. In: {GD} 2016. LNCS,
  vol.~9801, pp. 107--120 (2016). \doi{10.1007/978-3-319-50106-2\_9}

\bibitem{dlpt-ntptsc-17}
{Di Giacomo}, E., Liotta, G., Patrignani, M., Tappini, A.: {NodeTrix} planarity
  testing with small clusters. In: Graph Drawing and Network Visualization -
  25th International Symposium, {GD} 2017, Boston, MA, USA, September 25-27,
  2017, Revised Selected Papers. pp. 479--491. LNCS (2017).
  \doi{10.1007/978-3-319-73915-1\_37}

\bibitem{DBLP:conf/gd/EadesM95}
Eades, P., de~Mendon{\c{c}}a~Neto, C.F.X.: Vertex splitting and tension-free
  layout. In: Graph Drawing, Symposium on Graph Drawing, {GD} '95, Passau,
  Germany, September 20-22, 1995, Proceedings. pp. 202--211 (1995).
  \doi{10.1007/BFb0021804}, \url{https://doi.org/10.1007/BFb0021804}

\bibitem{Eppstein2018}
Eppstein, D., Kindermann, P., Kobourov, S., Liotta, G., Lubiw, A., Maignan, A.,
  Mondal, D., Vosoughpour, H., Whitesides, S., Wismath, S.: On the planar split
  thickness of graphs. Algorithmica  \textbf{80}(3),  977--994 (2018).
  \doi{10.1007/s00453-017-0328-y}

\bibitem{hfm-dhvsn-07}
Henry, N., Fekete, J., McGuffin, M.J.: {NodeTrix}: a hybrid visualization of
  social networks. {IEEE} Trans. Vis. Comput. Graph.  \textbf{13}(6),
  1302--1309 (2007)

\bibitem{DBLP:journals/csr/KobourovLM17}
Kobourov, S.G., Liotta, G., Montecchiani, F.: An annotated bibliography on
  1-planarity. Computer Science Review  \textbf{25},  49--67 (2017).
  \doi{10.1016/j.cosrev.2017.06.002},
  \url{https://doi.org/10.1016/j.cosrev.2017.06.002}

\bibitem{DBLP:conf/isaac/Kowalik06}
Kowalik, L.: Approximation scheme for lowest outdegree orientation and graph
  density measures. In: Algorithms and Computation, 17th International
  Symposium, {ISAAC} 2006, Kolkata, India, December 18-20, 2006, Proceedings.
  pp. 557--566 (2006). \doi{10.1007/11940128\_56},
  \url{https://doi.org/10.1007/11940128-56}

\bibitem{DBLP:journals/combinatorica/PachT97}
Pach, J., T{\'{o}}th, G.: Graphs drawn with few crossings per edge.
  Combinatorica  \textbf{17}(3),  427--439 (1997). \doi{10.1007/BF01215922},
  \url{https://doi.org/10.1007/BF01215922}

\bibitem{Zhang2013}
Zhang, X., Liu, G.: The structure of plane graphs with independent crossings
  and its applications to coloring problems. Central European Journal of
  Mathematics  \textbf{11}(2),  308--321 (2013).
  \doi{10.2478/s11533-012-0094-7}

\end{thebibliography}

\newpage
\appendix

%%%%%%%%
%%%%%%%%
%%%%%%%%
%%%%%%%%
%%%%%%%%
\section*{Appendix}\label{ap:omitted-details}

%%%%%%%%%%%%%%%%%%%%%%%%%%%%%%%%%%%%
\section{Supplement for Proof of Theorem~\ref{thm:edgebound2}} \label{apx:edgebound2}

%In Section~\ref{se:density} we show that given a $(k,p)$-planar representation of a $(k,p)$-planar graph with $n$ vertices, the number of inter-cluster edges of $\Gamma$ is $m_{inter}\leq (kp+3)N-6$. It follows that the total number of edges is $m \leq (kp+3)N-6 + \frac{k(k-1)}{2}N$. In order to obtain the desired bound, we express the value of $N$ in terms of $n$ and $k$. If all clusters have size $k$, then $N=\frac{n}{k}$ and therefore $m \leq (kp + 3)N - 6 + \frac{k(k-1)}{2}N = n(p+\frac{3}{k}+\frac{k}{2}-\frac{1}{2})-6$.

In this section, we complete the proof of Theorem \ref{thm:edgebound2} in the case where some clusters contain fewer than $k$ vertices. Let $G$ be a $(k,p)$-planar graph, $\Gamma$ a $(k,p)$-planar representation of $G$, and $N$ the number of clusters of $G$. In Section~\ref{se:density} we showed that $m = n(p+\frac{3}{k}+\frac{k}{2}-\frac{1}{2})-6$ if all clusters contain exactly $k$ vertices.

Denote by $V_1,\dots,V_N$ the clusters of $G$ and let $k_i$ be the size of cluster $V_i$. We first add non-crossing inter-cluster edges so that the faces of $\Gamma$ external to the cluster regions are triangles. Let $\Gamma_0$ be the resulting $(k,p)$-planar representation. Notice that $\Gamma_0$ can have multiple edges. We then construct a sequence $\Gamma_0,\Gamma_1,\dots,\Gamma_N$ of $(k,p)$-planar representations so that $\Gamma_N$ has all clusters of size $k$ and each $\Gamma_i$ is obtained from $\Gamma_{i-1}$ by taking into account the cluster $V_i$. We denote by $n_i$ and $m_i$ the number of vertices and edges of $\Gamma_i$, respectively. If $k_i=k$ cluster $V_i$ is not modified and we set $\Gamma_i=\Gamma_{i-1}$. If $k_i=1$ we remove the single vertex $v$ in $V_i$ and we triangulate the face that is created by this removal (see Fig.~\ref{fi:singleton-removal-a} and~Fig.\ref{fi:singleton-removal-b}). Also in this case multiple edges can be introduced. The number of vertices of $\Gamma_i$ is then $n_i=n_{i-1}-1$, while the number of edges is $m_i=m_{i-1}-\deg(v)+\deg(v)-3=m_{i-1}-3$. If $1 < k_i < k$, we augment the cluster $V_i$ with $h_i=k-k_i$ dummy vertices, we add $p\cdot h_i$ ports in between two consecutive ports associated with two different vertices of $V_i$ (see Fig.~\ref{fi:cluster-augmentation-a} and Fig.~\ref{fi:cluster-augmentation-b}). We then add $(k-h_i)h_i + \frac{h_i(h_i-1)}{2}=h_ik-\frac{h_i^2}{2}-\frac{h_i}{2}$ edges internally to $V_i$ and $p\cdot h_i$ edges externally to $V_i$ to triangulate the face enlarged by the insertions (again multiple edges can be created). The number of vertices of $\Gamma_i$ is $n_i=n_{i-1}+h_i$, while the number of edges of $\Gamma_i$ is $m_i = m_{i-1} + ph_i + h_ik - \frac{h_i^2}{2} - \frac{h_i}{2}$. We now prove the following claim that together with the fact that $m_N \leq  n_N(p+\frac{3}{k}+\frac{k}{2}-\frac{1}{2})-6$ (because $\Gamma_N$ has all clusters of size $k$) implies that $m_0 \leq n_0(p+\frac{3}{k}+\frac{k}{2}-\frac{1}{2})-6$. Since $n=n_0$ and $m \leq m_0$, the statement follows.

\begin{figure}[b]
	\centering
	\subfigure[] {\includegraphics[width=0.23\textwidth, page= 2]{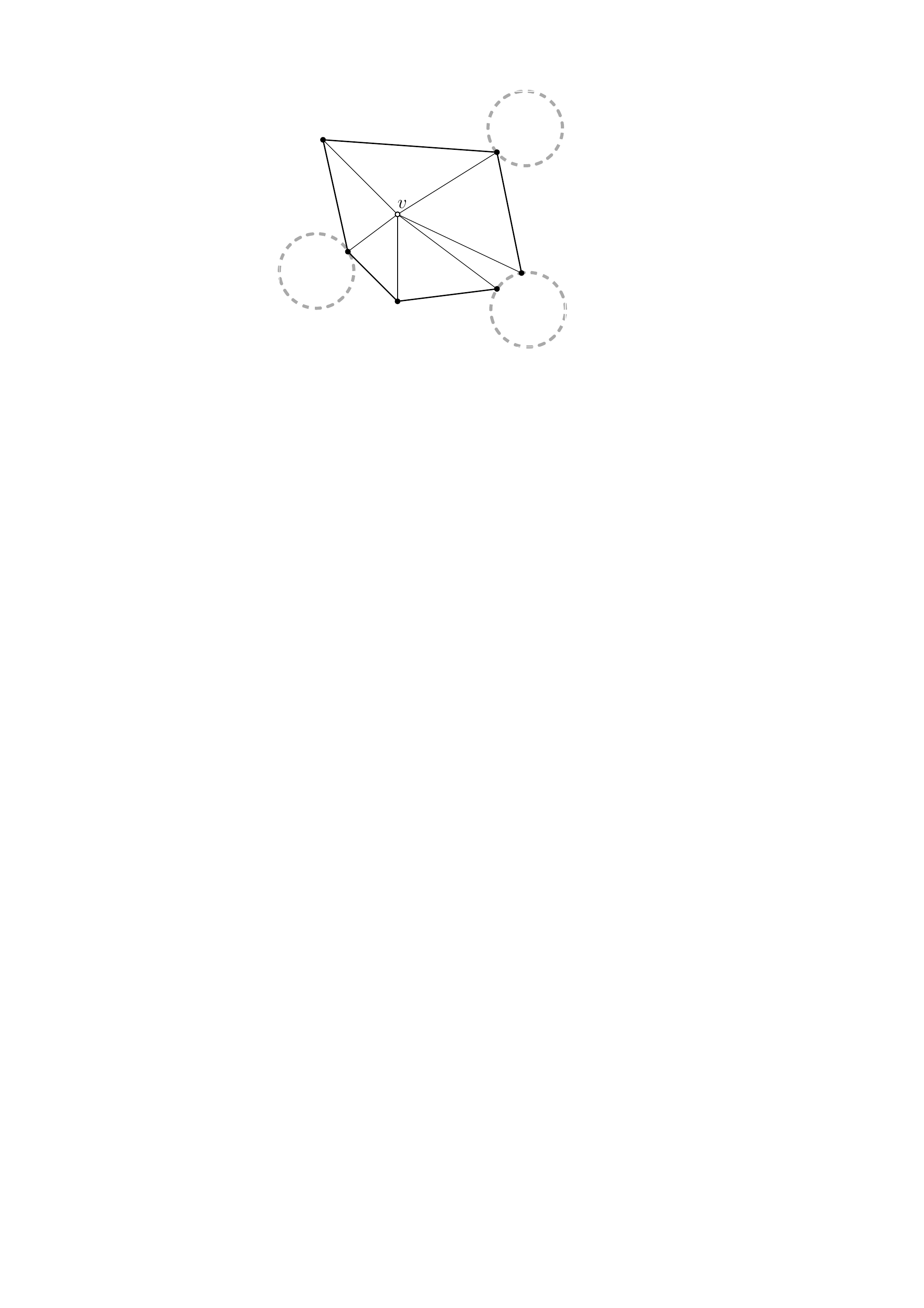} \label{fi:singleton-removal-a}}
	\hfil
	\subfigure[] {\includegraphics[width=0.23\textwidth, page=4]{singleton-removal} \label{fi:singleton-removal-b}}
	\hfil
	\subfigure[] {\includegraphics[width=0.23\textwidth, page=2]{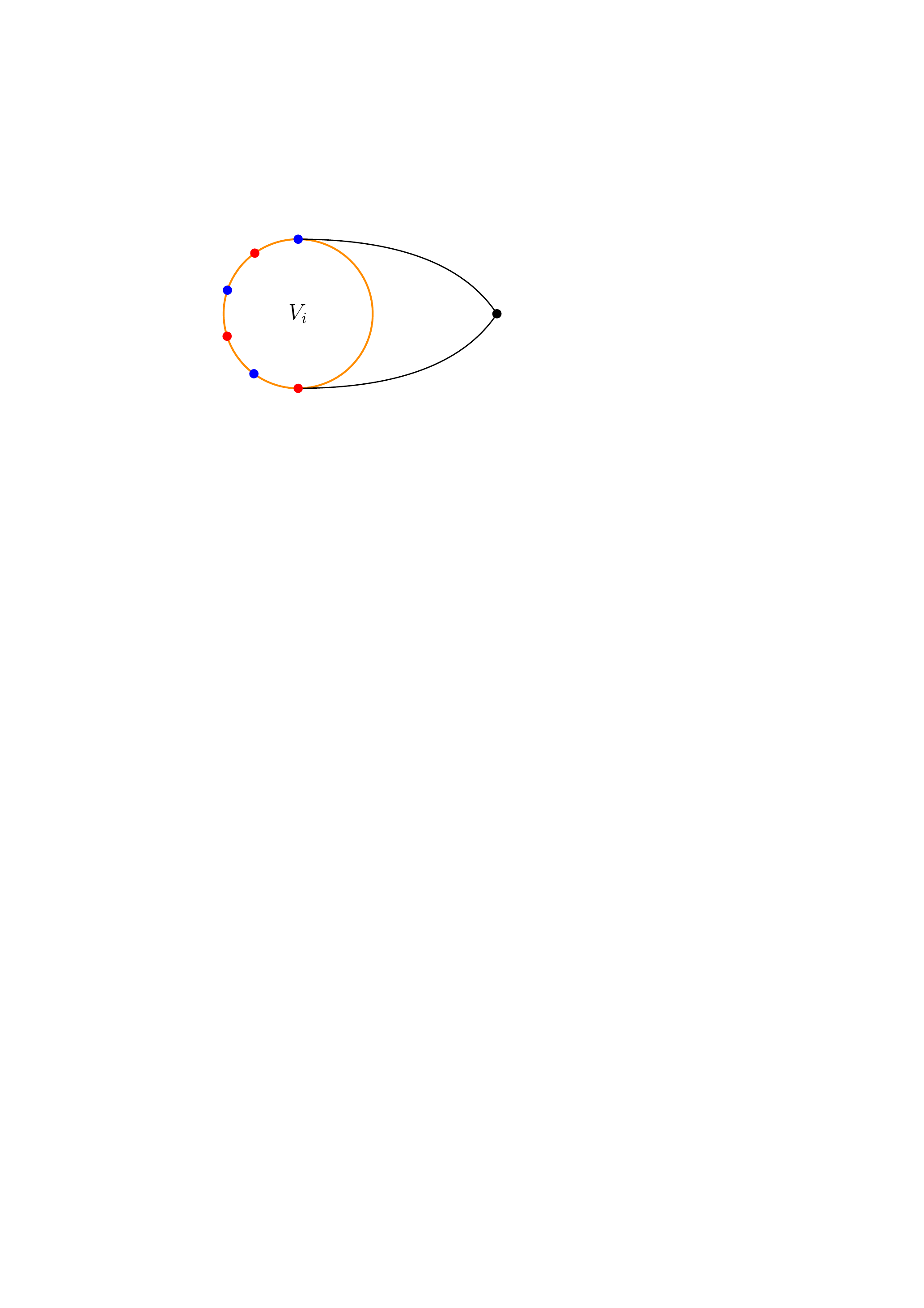} \label{fi:cluster-augmentation-a}}
	\hfil
	\subfigure[] {\includegraphics[width=0.23\textwidth, page=4]{cluster-augmentation} \label{fi:cluster-augmentation-b}}
	\caption{ (a) A cluster $V_i$ of size $k_i=1$, corresponding to a vertex $v$; (b) Removal of $v$ and triangulation; (c) A cluster $V_i$ of size $k_i=2$; (d) Augmentation of $V_i$ with $p \cdot h_i$ ports and triangulation of the face enlarged by the insertions.} \label{fi:clusters}
\end{figure}

	\begin{restatable}{myclaim}{claimdensity}\label{cl:one}
		If $m_i \leq  n_i(p+\frac{3}{k}+\frac{k}{2}-\frac{1}{2})-6$ then $m_{i-1} \leq  n_{i-1}(p+\frac{3}{k}+\frac{k}{2}-\frac{1}{2})-6$.
	\end{restatable}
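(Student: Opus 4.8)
The plan is to prove the claim by examining how $\Gamma_i$ is produced from $\Gamma_{i-1}$, which by the construction recalled above is one of three mutually exclusive operations: do nothing if $k_i=k$; delete a singleton if $k_i=1$; pad the cluster $V_i$ with $h_i=k-k_i$ dummy vertices if $1<k_i<k$. Write $\beta=p+\tfrac3k+\tfrac k2-\tfrac12$ for the coefficient appearing in the statement. In each case I would plug the exact bookkeeping identities relating $(n_i,m_i)$ to $(n_{i-1},m_{i-1})$ into the hypothesis $m_i\le \beta n_i-6$ and solve for $m_{i-1}$; this always gives $m_{i-1}\le \beta n_{i-1}-6+E$ for an explicit ``error term'' $E$, and the whole proof reduces to checking $E\le 0$.

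Before the case analysis I would record one elementary fact: $\beta\ge 3$. Since $p<k$ forces $k\ge 2$ and $p\ge 1$, and since $g(k):=\tfrac3k+\tfrac k2-\tfrac12$ satisfies $g(2)=g(3)=2$ and is increasing on integers $k\ge 3$ (its continuous minimum is at $k=\sqrt6$), we get $g(k)\ge 2$, hence $\beta=p+g(k)\ge 3$.

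For $k_i=k$ there is nothing to prove, as $\Gamma_i=\Gamma_{i-1}$. For $k_i=1$ we have $n_i=n_{i-1}-1$ and $m_i=m_{i-1}-3$, so the hypothesis rearranges to $m_{i-1}\le \beta n_{i-1}-6-(\beta-3)$, and $\beta\ge 3$ closes the case. For $1<k_i<k$ (which forces $k\ge 3$), set $h_i=k-k_i\in\{1,\dots,k-2\}$; then $n_i=n_{i-1}+h_i$ and $m_i=m_{i-1}+ph_i+h_ik-\tfrac{h_i^2}{2}-\tfrac{h_i}{2}$, and substituting yields
\[
  m_{i-1}\le \beta n_{i-1}-6+h_i\left(\tfrac3k-\tfrac k2+\tfrac{h_i}{2}\right),
\]
where I used $\beta-p-k+\tfrac12=\tfrac3k-\tfrac k2$. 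The error term is $\le 0$ exactly when $h_i\le k-\tfrac6k$; since $k\ge 3$ gives $\tfrac6k\le 2$, we indeed have $h_i\le k-2\le k-\tfrac6k$, completing the proof.

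I do not anticipate a real obstacle: the statement is a bookkeeping computation whose only delicate points are the two inequalities $\beta\ge 3$ (driving the singleton case) and the implication $1<k_i<k\Rightarrow k\ge 3$ (which is precisely what makes $h_i\le k-\tfrac6k$ hold). The one thing to be careful about is using the correct edge counts for the padded cluster --- the $ph_i$ external edges plus the $h_ik-\tfrac{h_i^2}{2}-\tfrac{h_i}{2}$ internal edges --- since those coefficients are exactly what cancels against $h_i\beta$.
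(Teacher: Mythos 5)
Your proposal is correct and follows essentially the same route as the paper's own argument: the identical three-case bookkeeping ($k_i=k$, $k_i=1$, $1<k_i<k$), reducing the singleton case to $p+\frac{3}{k}+\frac{k}{2}-\frac{1}{2}\geq 3$ and the padding case to $h_i\leq k-\frac{6}{k}$ (the paper's $k^2-kh_i-6\geq 0$), closed via $h_i\leq k-2$ and $k\geq 3$. One cosmetic remark: you invoke $p<k$ to get $k\geq 2$, an assumption the density upper bound (and hence the claim) does not make; this is harmless because the needed inequality $p+\frac{3}{k}+\frac{k}{2}-\frac{1}{2}\geq 3$ already holds for all integers $k\geq 1$ and $p\geq 1$, which is how the paper argues it.
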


Clearly nothing has to be proven for $k_i=k$. If $k_i=1$, we have $m_{i-1}-3 \leq (n_{i-1}-1)(p+\frac{3}{k}+\frac{k}{2}-\frac{1}{2})-6$ which gives $m_{i-1} \leq n_{i-1}(p+\frac{3}{k}+\frac{k}{2}-\frac{1}{2})-6 +3-p-\frac{3}{k}-\frac{k}{2}+\frac{1}{2}$. In order to prove that Claim~\ref{cl:one} holds in this case, we show that $3-p-\frac{3}{k}-\frac{k}{2}+\frac{1}{2} \leq 0$, which can be rewritten as $
p+\frac{3}{k}+\frac{k}{2}-\frac{7}{2} \geq 0$. Since $p \geq 1$ we have $p+\frac{3}{k}+\frac{k}{2}-\frac{7}{2} \geq \frac{3}{k}+\frac{k}{2}-\frac{5}{2} $, which is greater than or equal to $0$ for any integer value of $k$. Consider now the case $1 < k_i < k$; notice that this case is possible only for $k \geq 3$. We have $m_{i-1} + ph_i + h_ik - \frac{h_i^2}{2} - \frac{h_i}{2} \leq (n_{i-1}+h_i)(p+\frac{3}{k}+\frac{k}{2}-\frac{1}{2})-6$, which in turn gives $m_{i-1} \leq (n_{i-1}+h_i)(p+\frac{3}{k}+\frac{k}{2}-\frac{1}{2})-6 -ph_i-h_ik+\frac{h_i^2}{2}+\frac{h_i}{2}$. Again, we prove that  $h_i(p+\frac{3}{k}+\frac{k}{2}-\frac{1}{2})-ph_i - h_ik + \frac{h_i^2}{2} + \frac{h_i}{2} \leq 0$. Rearranging, we obtain $k^2-kh_i-6 \geq 0$; since $h_i \leq k-2$, we have $k^2-kh_i-6 \geq 2k-6$, which holds for every $k \geq 3$.

\section{Supplement for Proof of Theorem~\ref{th:$(3,1)$-linear}} \label{apx:31_linear}

%\theoremlinear*
%\begin{proof}
In this section, we complete the proof of Theorem \ref{th:$(3,1)$-linear} by showing that the class of $(k,1)$-planar graphs coincides with the class of planar graphs for $k=1,2,3$ and that the class of $(4,1)$-planar graphs coincides with the class of IC-planar graphs.

	If $G$ is planar, $G$ is trivially $(k,1)$-planar for all positive integers $k$. Let $G$ be a $(k,1)$-planar graph for some $k \leq 3$, and let $\Gamma$ be a $(k,1)$-planar representation of $G$. Replace each cluster of $G$ of size $h$ with an $h$-clique. Since $h \leq 3$ the obtained drawing is planar.
	
    Recall that an IC-planar graph admits a 1-planar embedding in which no two pairs of crossing edges share a vertex. Let $G$ be an IC-planar graph, and let $\Gamma$ be an IC-planar embedding of $G$. $\Gamma$ can be transformed into a $(4,1)$-planar representation of $G$ by replacing the vertices incident to each pair of crossing edges with a cluster.

    Let $G$ be a $(4,1)$-planar graph and let $\Gamma$ be a $(4,1)$-planar representation of $G$. Each cluster of $G$ is a subgraph of a $4$-clique and therefore each cluster region in $\Gamma$ can be replaced with a drawing that contains at most one pair of crossing edges. As $\Gamma$ contains no crossing inter-cluster edges, the resulting embedding is IC-planar.

\section{Proof of Lemma~\ref{le:Kvertex}}\label{apx:Kvertex}

\lemmaKvertex*
\begin{proof}
	Suppose there exists a $(2,2)$-planar representation of $G'$ that leaves $v$ unclustered or clustered with a vertex outside of $G'$. If the remaining vertices of the $G'$ subgraph are grouped into at least five clusters, $G'$ does not admit a $(2,2)$-planar representation because its graph of clusters includes a $K_5$ subgraph.
	
	Alternatively, suppose the remaining vertices of $G'$ are grouped into four clusters, in which case $G'$ consists of three $2$-clusters and two vertices which may or may not be clustered with additional vertices outside of $G'$. For the purpose of our analysis, we may ignore any vertices outside of $G'$, as their presence cannot affect the possibility of a $(2,2)$-planar representation of $G'$.

   Each $2$-cluster can contain at most $1$ intra-cluster edge, so any $(2,2)$-planar representation of $G'$ has $23$ inter-cluster edges. However, by Equation~\ref{eq:9}, we have that $m_{inter}\leq n_S + 3N-6-s$ in any $(k,p)$-planar representation $\Gamma$ of a graph $G=(V,E)$, where $s$ is the number of clusters consisting of a single vertex and $n_S$ is the total number of vertices in the skeleton of $\Gamma$. When applied to $G'$, Equation ~\ref{eq:9} implies that $23 \leq 14 + 15 - 6 - 2 = 21$, a contradiction. Thus any $(2,2)$-planar representation of $G'$ creates four $2$-clusters as shown in Fig.~\ref{fi:k8minus_22planar}. \qed
\end{proof}

\begin{figure}[tb]
	\centering
	\subfigure[]{\includegraphics[width=0.3\textwidth]{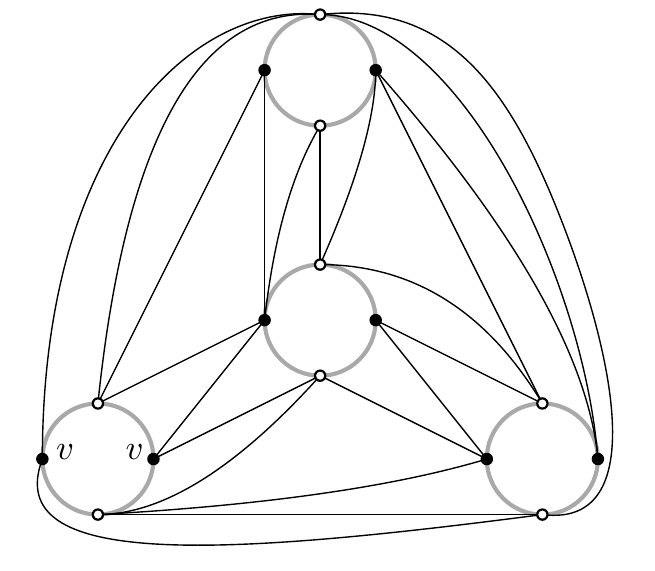}\label{fi:k8minus_22planar}}
	\hfil
	\subfigure[]{\includegraphics[width=0.5\textwidth]{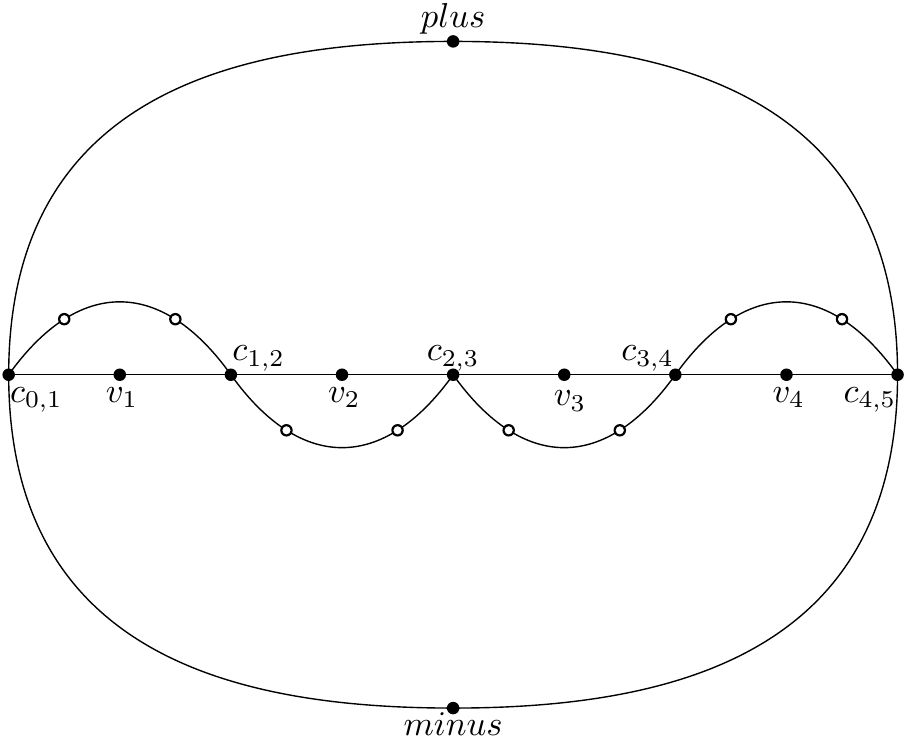}\label{fi:oriented_varcycle}}
	\caption{ (a) A $(2,2)$-planar representation of a $K$-vertex $v$ and its associated $K^-_{8}$ subgraph. (b) A drawing of the variable cycle of $G_0$ with false literal boundaries oriented according to variable assignment. }
\end{figure}

\section{Supplement for Proof of Theorem~\ref{thm:22_NP-hard}}\label{ap:testing}

In this section, we complete the proof of Theorem \ref{thm:22_NP-hard} by proving that our constructed graph $G$ is $(2,2)$-planar if and only if the corresponding instance $\Phi$ of {\sc Planar Monotone 3-SAT} is a {\sc Yes} instance.

Let $\Phi$ be a {\sc Yes} instance of {\sc Planar Monotone 3-SAT}, and let $A$ be an assignment function satisfying $\Phi$. We show that the graph $G$ corresponding to $\Phi$ is $(2,2)$-planar by constructing a $(2,2)$-planar representation of $G$ using $\Phi$ as a template.
	
	Replace each variable rectangle in $\Phi$ with the corresponding vertex of $G$ and draw the variable cycle. We refer to the region defined by the variable cycle and the $plus$ ($minus$) vertex as the \emph{positive side} (\emph{negative side}). For each variable $v_i$, draw its false literal boundary on the negative side if $A(v_i)=True$ and on the positive side if $A(v_i)=False$.	Fig.~\ref{fi:oriented_varcycle} illustrates a drawing of the variable cycle and false literal boundaries of $G_0$ according to the assignment of $v_2$ and $v_3$ to $True$ and $v_1$ and $v_4$ to $False$.
	
	Let $l_{j,i}$ be the literal vertex corresponding to clause $C_j$ and variable $v_i$. Place $l_{j,i}$ at the point of intersection between the rectangle associated with $C_j$ and the vertical segment connecting the rectangles $C_j$ and $v_i$.
	
	Connect the three literal vertices of $C_j$ to form a face, and insert $closed_j$ and $open_j$ on the interior, creating one necessary crossing. Insert the tree structure edges, which by construction can be added without creating crossings. Connect literal vertices to variable vertices, which creates a crossing on a false literal boundary precisely when the value assigned to a variable by $A$ does not match the literal. Fig.~\ref{fi:Gzero_preplanar} illustrates such a drawing of $G_0$.
	
	Resolve each crossing at a false literal boundary by clustering the literal vertex with a vertex on the boundary. The specification that each false literal boundary has at least $max(p_i, q_i)$ vertices ensures that this operation can be performed.	Because $A$ satisfies $F$, each clause gadget has at least one literal vertex that can be connected to its variable vertex without crossing a false literal boundary. Cluster this vertex with $open_j$ to resolve each clause gadget crossing. The result of this process is a $(2,2)$ representation of $G$ as illustrated in Fig.~\ref{fi:Gzero_22planar}.
	
	\begin{figure}[tb]
		\centering
		\subfigure[] {\includegraphics[width=0.47\textwidth]{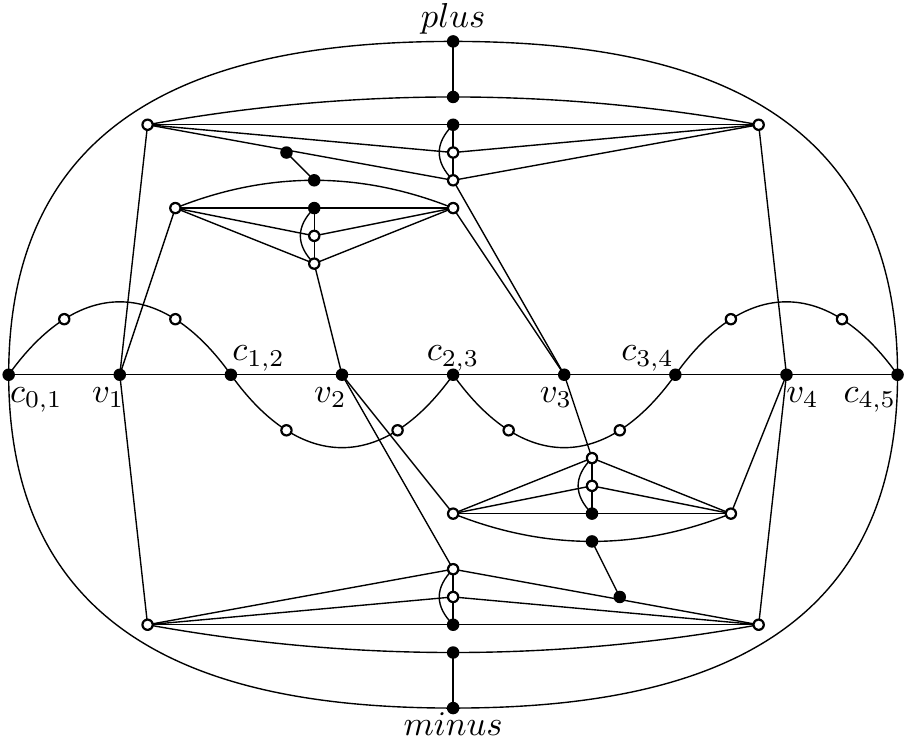} 	\label{fi:Gzero_preplanar}}
		\hfil
		\subfigure[] {\includegraphics[width=0.47\textwidth]{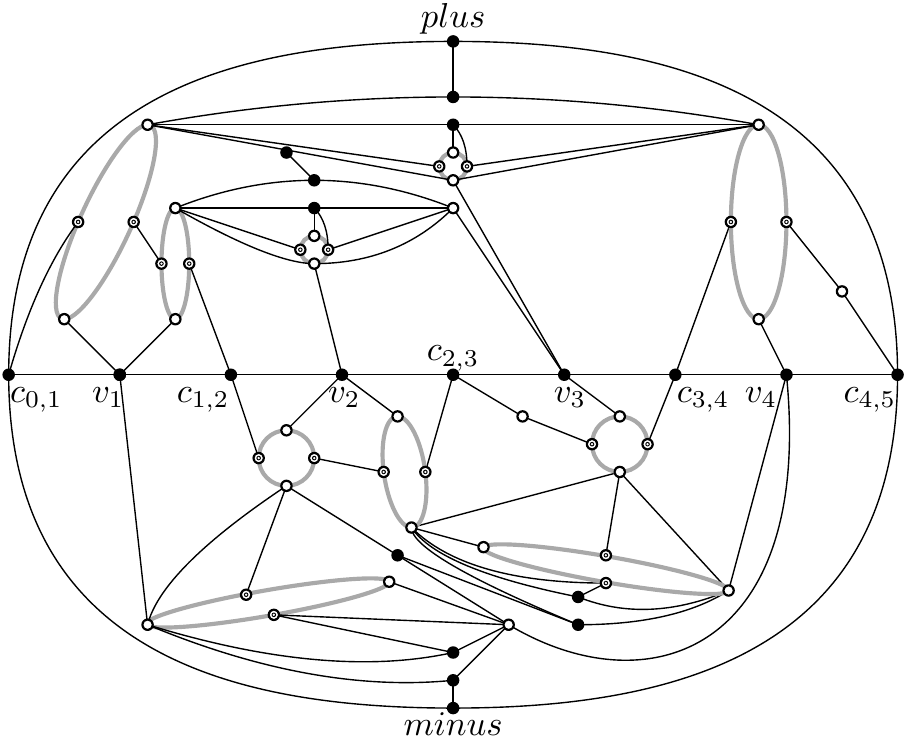} \label{fi:Gzero_22planar}}
		\caption{(a) A drawing of the graph $G_0$. (b) A $(2,2)$-planar representation of $G_0$.} \label{fi:G_0-preplanar-22planar}
	\end{figure}
	
	%	\begin{figure}[tb]
	%		\centering
	%		\includegraphics[width=0.4\textwidth]{flb_clustering}
	%		\caption{ Clustering two vertices to remove a crossing at a false literal boundary. }
	%		\label{fi:flb_clustering}
	%	\end{figure}
	
	Let $G$ be a {\sc Yes} instance of {\sc $(2,2)$-Planarity} corresponding to an instance $\Phi$ of {\sc Planar Monotone 3-SAT}. We show that $\Phi$ is a {\sc Yes} instance of {\sc Planar Monotone 3-SAT}.

	Let $\Gamma$ be a $(2,2)$-planar representation of $G$. First, note that any vertices $v_1$ and $v_2$ connected by an edge in $G$ must be drawn on the same side of the variable cycle in any $(2,2)$-planar representation of $G$. This follows from Lemma~\ref{le:Kvertex}, as neither $v_1$ nor $v_2$ can be clustered with any K-vertex in the variable cycle. Thus the positive (negative) clause gadgets must all be drawn on the same side of the variable cycle as they are connected by the tree structure to $plus$ ($minus$) and the variable vertices. We refer to the sides of the cycle with the positive and negative clause gadgets as the $positive$ and $negative$ sides of the cycle. As a consequence of Lemma~\ref{le:Kvertex}, each false literal boundary is drawn either on the positive or on the negative side of the cycle as well.
	
	Define an assignment function $A$ by setting $A(v_i)$ to $True$ ($False$) if the false literal boundary for $v_i$ is drawn on the $positive$ ($negative$) side of the vertex cycle in $\Gamma$. We claim that at least one literal vertex of each positive (negative) clause gadget is connected in $\Gamma$ to a variable vertex with $A(v_i)$ set to True (False).
	
	Without loss of generality, consider the case of a positive clause gadget $C_j$ with literals $l_{j,1}$, $l_{j,2}$, and $l_{j,3}$ connected to variables $v_1$, $v_2$, and $v_3$. Assume for contradiction that every literal vertex of $C_j$ is connected in $\Gamma$ to a variable $v$ with $A(v)=False$, which means that the false literal boundaries of $v_1$, $v_2$, and $v_3$ are drawn on the positive side of the variable cycle. We show that any placement of the K-vertex $closed_j$ creates an edge crossing in $\Gamma$, contradicting our assumption.
	
	\begin{figure}[tb]
		\centering
		\subfigure[] {\includegraphics[width=0.4\textwidth]{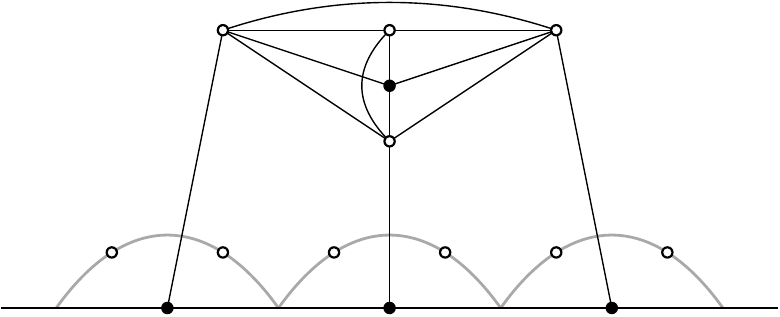} \label{fi:case1}}
		\hfil
		\subfigure[] {\includegraphics[width=0.4\textwidth]{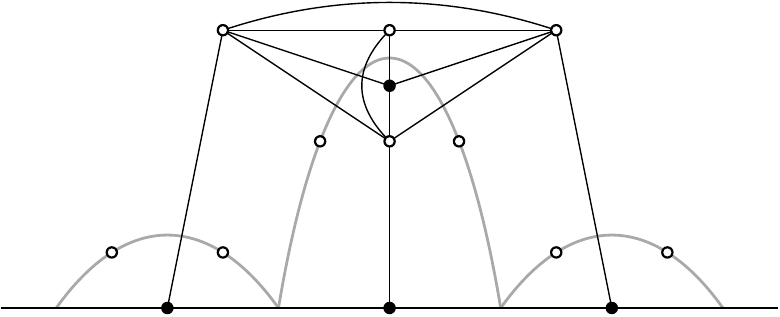} \label{fi:case2}}
		\hfil
		\subfigure[] {\includegraphics[width=0.4\textwidth]{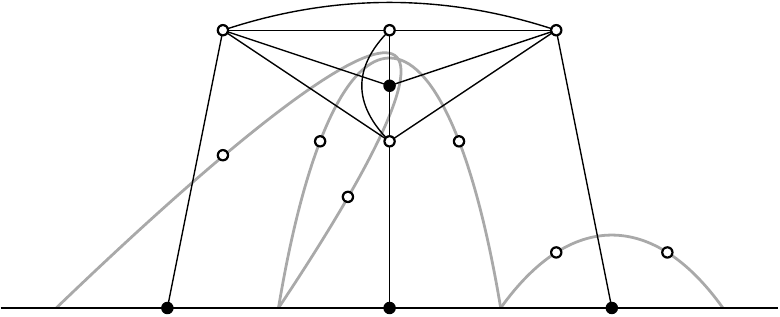} \label{fi:case3}}
		\hfil
		\subfigure[] {\includegraphics[width=0.4\textwidth]{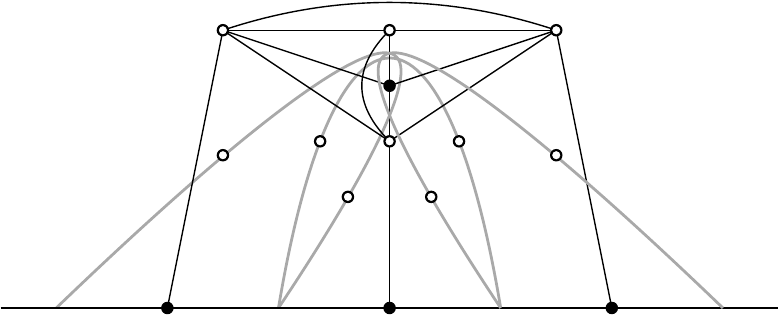} \label{fi:case4}}
		\caption{Possible placements of the clause vertex $closed_j$ relative to its three corresponding clause boundaries.} \label{fi:flbs}
	\end{figure}
	
	Suppose first that $closed_j$ is placed outside the false literal boundaries. Then each of $v_1$, $v_2$, and $v_3$ must be clustered with a boundary vertex and the clause gadget does not admit a $(2,2)$-planar representation (see Fig.~\ref{fi:case1}).
	Now suppose that $closed_j$ is drawn inside the false literal boundary of one constituent variable, $v_2$ for example. In this case, the path ($closed_j$, $l_{j,1}$, $v_1$) intersects two false literal boundaries. Because $closed_j$ and $v_2$ are $K$-vertices, only $l_{j,2}$ can be clustered with a false literal boundary vertex and thus this placement creates at least one necessary crossing (see Fig.~\ref{fi:case2}). Likewise, suppose that $closed_j$ is drawn inside the false literal boundary of two constituent variables, for example, $v_1$ and $v_2$. In this case, the path ($closed_j$, $l_{j,3}$, $v_3$) crosses three false literal boundaries and creates a necessary crossing (see Fig.~\ref{fi:case3}). Finally, suppose that $closed_j$ is drawn inside all three false literal boundaries (see Fig.~\ref{fi:case4}). In this case, the path ($closed_j$, $l_{j,1}$, $v_1$) crosses two false literal boundaries and creates a necessary crossing. Thus, regardless of the position of the a vertex $closed_j$ in $\Gamma$, at least one of the literal vertices of $C_j$ must match the assignment of its associated variable vertex. This concludes the proof of our claim, i.e., that at least one literal vertex $l_i$ of each clause gadget $C_j$ in $\Gamma$ is connected to a variable $v_i$ with $A(v_i) = l_i$. Thus $A$ is a satisfying assignment for $F$, and $\Phi$ is a {\sc Yes} instance of {\sc Planar Monotone 3-SAT}.

\end{document}